\definecolor{DarkGreen}{rgb}{0.1,0.5,0.1}
\definecolor{DarkRed}{rgb}{0.5,0.1,0.1}
\definecolor{DarkBlue}{rgb}{0.1,0.1,0.5} 
\newtheorem{theorem}{Theorem} 
\newtheorem{lemma}[theorem]{Lemma} 
\newtheorem{corollary}[theorem]{Corollary}
\theoremstyle{definition}
\numberwithin{equation}{section}
\def\>{\rangle} 
\def\<{\langle}
 \def\case#1{{\left\{  
	\begin{array}{ll}
  #1
	\end{array}
 \right.   }} 
\DeclareMathOperator{\tr}{Tr}
\begin{document}

\title{Robust quantum metrology with explicit symmetric states}%

\author{ 
    \IEEEauthorblockN{Yingkai Ouyang\IEEEauthorrefmark{1}\IEEEauthorrefmark{2}, Nathan Shettell\IEEEauthorrefmark{3} and Damian Markham \IEEEauthorrefmark{3}}\\
    \IEEEauthorblockA{\IEEEauthorrefmark{1}Department of Physics \& Astronomy, University of Sheffield, Sheffield, S3 7RH, United Kingdom}\\
    \IEEEauthorblockA{\IEEEauthorrefmark{2}Department of Electrical \& Computer Engineering, National University of Singapore, Singapore}\\
    \IEEEauthorblockA{\IEEEauthorrefmark{3}Laboratoire d'Informatique de Paris 6, CNRS, Sorbonne Universit\'e, 4 place Jussieu, 75005 Paris, France}
\thanks{Y. Ouyang is with the National University of Singapore. E-mail: oyingkai@gmail.com}
}

\maketitle

\begin{abstract}
Quantum metrology is a promising practical use case for quantum technologies,
where physical quantities can be measured with unprecedented precision.
In lieu of quantum error correction procedures, near term quantum devices are expected to be noisy, and we have to make do with noisy probe states. We prove that, for a set of carefully chosen symmetric probe states that lie within certain quantum error correction codes, quantum metrology exhibits an advantage over classical metrology even after the probe states are corrupted by a constant number of erasure and dephasing errors. These probe states prove useful for robust metrology not only in the NISQ regime, but also in the asymptotic setting where they achieve Heisenberg scaling. This brings us closer towards making robust quantum metrology a technological reality.
\end{abstract}

\section{Introduction} 
To harness the full powers of quantum technologies, quantum error correction is necessary to mitigate the inevitable decoherence of quantum information. 
However, in lieu of the era of quantum error correction, it would nonetheless be great to be able to unlock some of the potential of quantum technologies in near term quantum devices.
We are approaching the Noisy Intermediate-Scale Quantum (NISQ) era \cite{nisq}, where quantum devices, albeit noisy, will have between 50 to 100 qubits in the near future. 
An important question is what quantum advantage such NISQ devices might offer us in the near-term.
Given that quantum metrology appears to be less demanding of the level of precision that is required to manipulate quantum data as compared to a fully fledged quantum computing device, 
one might wonder if quantum metrology could provide a quantum advantage using these NISQ devices.

The main idea behind quantum metrology is to allow high-resolution and highly sensitive measurements of physical parameters by consuming a quantum resource state, often called the probe state.
We expect that the probe state utilized will be highly entangled, which might still have some entanglement even when noisy and thereby still have some use.
If quantum metrology does provide a quantum advantage in practice, it would allow the development of new sensor chips that utilize quantum entanglement to achieve unprecedented precision and sensitivity in their measurements.
This has implications across all fields where sensors are important, such as in detecting gravitational waves \cite{abbott2016observation}, enhancing radar technologies \cite{maccone2019quantum}, and increased sensitivity in medical measurements \cite{taylor2016quantum},
optical interferometry \cite{PhysRevA.80.013825}, field sensing \cite{unden2016quantum}, Hamiltonian tomography \cite{zhang2014quantum,kura2018finite} and deformation sensing \cite{sidhu2017quantum,sidhu2018quantum}.

While quantum noise degrades the quality of entanglement within a probe state consumed in quantum metrology, it is nonetheless possible to yield a quantum advantage in sensing if (1) the highly entangled probe state is carefully chosen, and (2) if the noise is not too severe.
The scenario of interest is that of robust quantum metrology, which we define to be the following.
A chosen probe state is passively exposed to noise, with no application of quantum error correction.
This is in contrast to other schemes for noisy quantum metrology where quantum error correction is performed multiple times on the probe state before the end of signal accumulation to mitigate the impact of noise accumulation \cite{kessler2014quantum,dur2014improved,arrad2014increasing,demkowicz2014using,unden2016quantum,matsuzaki2017magnetic,Zhou2018,layden2019ancilla,gorecki2019quantum,PRXQuantum.2.010343}.
The corrupted probe state is then directly used for the purpose of quantum metrology.
If the quantum Fisher information (\textsc{QFI}) of the resultant optimal measurement exceeds that of optimal classical Fisher information (\textsc{CFI}), 
we say that the chosen probe state allows for robust quantum metrology.
We emphasize that in contrast to metrology schemes that employ quantum error correction that often consider a number of errors that grows with the system size, we consider here a constant number of errors while the system size grows.

To illustrate our findings, we consider the canonical problem in quantum metrology where the observable to be measured is the spin of a single qubit. 
For the interested reader, we suggest a recent review on the field of quantum metrology \cite{sidhu2017quantum}, which has been extensively studied by a broad community.
By using multiple measurements, classically, the signal quantified by the \textsc{CFI} can be enhanced by a factor of $N$, where $N$ denotes the number of measurements. 
It is well-known that if one prepares an $N$-qubit GHZ state and then measures each qubit identically, the corresponding \textsc{QFI} can reach $N^2$, and thereby greatly surpass the best possible classical strategy. 
Indeed, in the noiseless case, the GHZ state is the optimal probe state.
This quantum advantage becomes especially prominent when $N$ is very large.
This $N^2$ scaling, known as the Heisenberg scaling, however vanishes when there is a single erasure or phase error, as the GHZ state becomes a classical mixture of the all 0s and all 1s state.
  
It has been shown that for i.i.d. noise, the Heisenberg scaling of quantum metrology is lost, and shot noise behavior reflecting classical scaling is recovered \cite{DemkowiczDobrzaski2012}.
We thereby lose the asymptotic quantum advantage of quantum metrology in this setting.
However, there might still be hope for robust metrology in an intermediate noise regime, where the ratio of the number of errors to the number of qubits vanishes asymptotically.
Recently, Oszmaniec {\em et al.} looked into the use of 
random states of distinguishable particles for quantum
metrology when a constant number of particles are erased \cite{PRX-random-states}.
This corresponds to a scenario where a known subset of qubits are damaged. 
Quantum metrology can then be performed on the remaining undamaged qubits.
They found that even if these states are pure and hence typically highly entangled,
they are almost surely useless for metrology.
Remarkably, random symmetric states are almost surely useful under finite particle loss.
This suggests that it would be fruitful to consider using explicit symmetric states for robust quantum metrology.
However this problem is also non-trivial because, as mentioned, the most obvious symmetric state, the GHZ state, is known to be bad for robust quantum metrology, because a single $Z$ error can totally dephase it.
Aside from the GHZ states, other symmetric states such as spin-squeezed states \cite{ma2011quantum} and symmetrized GHZ states \cite{Baumgratz2016_PRL} have also been considered for quantum metrology .

Quantum states that comprise quantum error correcting codes are known to be highly entangled, 
and intuitively, one expects that it is their underlying entanglement that imparts some of their error correction capabilities. 
While one might expect that  quantum states that are good for quantum error correction ought to be also good for robust quantum metrology,
this intuition is false, because while random quantum codes are almost surely good quantum error correction codes \cite{ABKL00,FeM04,MaY08,JiX11,ouyang2014concatenated},
random quantum states are almost surely useless for robust quantum metrology \cite{PRX-random-states}.

In this paper, we study the performance of quantum states that arise from some of these symmetric quantum codes in Ref.~\cite{ouyang2014permutation} for use in quantum metrology.
Permutation-invariant quantum codes are quantum codes that lie entirely within the symmetric space \cite{Rus00,PoR04,ouyang2014permutation,ouyang2015permutation,OUYANG201743,ouyang2019permutation}, and are invariant under any permutation of their underlying particles.  
Research in permutation-invariant codes has not solely been just of theoretical interest, as the preparation of such codes in physically realistic scenarios has been studied \cite{init-picode-2019-PRA}, and have also been considered to use for quantum storage \cite{ouyang-memories}.

Our first result for robust metrology, addresses the scenario of erasure errors.
Here, we provide analytical lower bounds of the \textsc{QFI} for explicitly chosen symmetric probe states. 
Asymptotically. we show that the \textsc{QFI} can attain the Heisenberg scaling with a single or two erasures. 

Our second result is an analytical lower bound on the \textsc{QFI} for our probe state when dephasing errors occur on our qubits. 
By being able to consider dephasing errors, we go beyond the paradigm of Oszmaniec {\em et al.} \cite{PRX-random-states}, where the only type of errors considered for robust metrology are erasure errors.
We consider a noisy channel that comprises of convex combinations of unitary processes where either zero or one phase error occurs uniformly on any of the underlying qubits. For such a noise model, when the probability of zero phase errors is equal to the probability of one phase error, a GHZ state becomes a convex combination of the all zeros state and and the all ones state, and has zero QFI.
In contrast our probe state can have a positive QFI in this scenario (see Theorem \ref{thm:single-qubit-dephasing-non-asymptotic}).
We also discuss how our analysis applies to the scenario of i.i.d. dephasing noise on all qubits, in the limit where the probability of dephasing per qubit approaches zero faster than the reciprocal of the number of qubits.

Interestingly, our lower bounds on the \textsc{QFI} can be expressed in terms of the summation of binomial coefficients $\binom n k$ multiplied by polynomials in the variable $k$. Moreover these polynomials are closely related to Krawtchouk polynomials that commonly arise in classical coding theory. 

We believe that our results pave the way forward towards realizing robust metrology in the NISQ era. This is because as one can see from Figure \ref{fig:bounds} and Figure \ref{fig:bounds dephasing}, a quantum advantage can in principle already be attained using our proposed noisy probe states for either a single erasure error or a single dephasing error. 

  \section{Explicit symmetric probe states}
  In the problem of phase sensing, an unknown parameter $\chi$ of a Hamiltonian $\chi H$ is to be measured. In this paper, we take $H$ to be a sum of phase-flip Pauli operators, so that
\begin{align}
H = \sum_{j=1}^N Z_j,
\end{align}
where $Z_j$ denotes the Pauli operator that applies a $Z$ on qubit $j$ and applies the identity operator on all other qubits.
The \textsc{QFI} can be used to quantify the performance of a quantum state $\rho$ for quantum metrology with respect to the generator $H$. 
It is well known that a lower bound for the \textsc{QFI} can be obtained with the trace norm of commutator of $\rho$ and $H$ \footnote{One can refer to \cite{PRX-random-states} for example}. 
Namely, the \textsc{QFI} is at least
\begin{align}
  \| [\rho, H] \|_1^2    \ge    \| [\rho, H] \|_2^2,
\end{align}
where $[\rho,H] = \rho H - H \rho$ and $\| A\|_p $ denotes the $p$-norm of the vector of singular values of $A$.
Using Eq.~(\ref{eq:norm2-[A,B]}) we find that
\begin{align}
     \| [\rho ,H] \|_2^2 
     &=
     2 \tr (\rho^2 H^2) - 2 \tr( \rho H \rho H ) , \label{eq:QFI bound}
\end{align}
and it is \eqref{eq:QFI bound} that we use to evaluate a generator-type lower bound on the \textsc{QFI} \footnote{See the appendix for details.}.

For reasons explained earlier, we wish to explore metrology on symmetric states.
While certain symmetric states are known to exhibit a large amount of entanglement \cite{Mar11}, there is no guarantee that they are useful for robust quantum metrology. 
The state that we propose to use for robust metrology is some well chosen state within the codespace of a permutation-invariant code that corrects $t$ errors. 
The intuition is that since the quantum code is symmetric and can correct errors, its codewords ought to be useful for robust metrology.
The intuition needs to be quantified, and we achieve this here.

While many families of permutation-invariant codes have been studied  \cite{Rus00,PoR04,ouyang2014permutation,ouyang2015permutation,OUYANG201743,ouyang2019permutation},
we focus our attention on a code family supplied in \cite{ouyang2014permutation} which is completely described by three parameters, given by $g,n$ and $u$. 
Intuitively, $g$ and $n$ are parameters that control the number of correctible bit-flip  and phase-flip errors respectively, and $u$ is a scaling parameter that is at least 1 and $(u-1)$ is the proportion of extra qubits used.
Different choices of $u$ do not decrease the number of errors the symmetric code can correct \cite{ouyang2014permutation}.
The total number of qubits comprising of these codes is $N = g n u$, 
and such codes are known as gnu codes.
For technical reasons, we choose $u=1$ to provide robustness against erasure errors, and we choose $u=2$ to provide robustness against dephasing errors. 
The corresponding logical codewords are
\begin{align}
|0_L\> = \sqrt{2^{-(n-1)}} \sum_{\substack{0 \le j \le n \\ j\ {\rm even}} } 
\sqrt{\binom n j}
|D^{gnu}_{gj}\>	 ,\notag\\
|1_L\> = \sqrt{2^{-(n-1)}} \sum_{\substack{0 \le j \le n \\ j\ {\rm odd}} } 
\sqrt{\binom n j}
|D^{gnu}_{gj}\>.
\end{align}
Here $|D^{gnu}_{gj}\>$ are Dicke states on $N=gnu$ qubits with $gj$ 1's.
To be precise, for every $w = 0,\dots, N$, we have
\begin{align}
|D^N_w\> = \frac{1}{\sqrt{\binom N w}} \sum_{\substack{ x_1,\dots, x_N \in \{0,1\} \\ x_1+\dots + x_N = w }} |x_1\>\otimes \dots \otimes |x_N\>.\label{eq:dicke-defi}
\end{align}
Moreover, the quantum code corrects $t$ arbitrary errors whenever $g,n\ge 2t+1$.
For example, to correct 1 error, we have $t=1$, we can have $g=n=3, u=1$ and $N=9$.
It has also been noted that this code shares many mathematical similarities with the binomial codes recently studied in the context of quantum error correction on a single bosonic mode \cite{BinomialCodes2016}.

For our application to metrology, we do not require the full power of quantum error correction. We restrict our attention to a single symmetric probe state that lies within the codespace, which is given by
\begin{align}
|\varphi_u\> = \frac{|0_L\> + |1_L\>}{\sqrt 2} =  \sqrt{2^{-n}} \sum_{j=0}^n \sqrt{\binom n j }|D^{gnu}_{gj}\>	. \label{eq:resource-state-general}
\end{align}
This symmetric probe state can be interpreted as the logical plus operator a permutation-invariant quantum code with parameters $g,n$ and $u$,
and we call this a gnu probe state. 
To summarize, we believe that studying this family of probe states is advantageous because of their symmetry, their quantum error correction properties, and their simple structure in the Dicke basis.
 
 Our first result is that explicitly constructed states, namely our gnu probe states, can serve as good probe states for robust quantum metrology in the case of erasure errors.
When the number of erasures is not too many, we show that the \textsc{QFI} on the unerased qubits approaches the Heisenberg scaling, and exhibits a quantum advantage in the NISQ regime.
More precisely, to protect against $t$ erasure errors, we choose $u=1$ and set $n=2,3,4,5$ and 6, and increase the values of $g$. 
In this scenario, the \textsc{QFI} is asymptotically lower bounded by a quadratic function in $N$, which reproduces the Heisenberg scaling up to a constant.
We present this result formally in Theorem \ref{thm:erasures-exact}, and illustrate our lower bound on the \textsc{QFI} for a number of qubits compatible with the NISQ regime in Figure \ref{fig:bounds}.

In our second result, we use a gnu probe state with $u=2$ and $N=2gn$ qubits.
We consider first the problem of a noisy process that introduces either no phase error or a single phase error randomly on the underlying qubits. 
We calculate an analytical lower bound for the \textsc{QFI} using our probe state in this scenario, and this is given explicitly in Theorem \ref{fig:bounds dephasing}.
We also take the asymptotic limit of large $g$ and constant $n$, with $n \ge 2$. 
In this scenario, we can see from Corollary \ref{coro:dephasing-analytic} that we do recover the Heisenberg scaling.
We can also consider the effect of our probe state against i.i.d type dephasing errors where the probability of dephasing per qubit is $t/(2gn)$.
By only considering the leading order phase errors from this i.i.d dephasing model, we obtain lower bounds for the \textsc{QFI} of our corrupted probe state in this scenario, illustrate our lower bound on the \textsc{QFI} numerically in Figure \ref{fig:bounds dephasing}.

\section{Erasure errors}
In this section, we prove that quantum metrology performed on an explicitly chosen symmetric probe state can recover Heisenberg scaling when very few erasure errors have occured.
For example, when $t$ qubits are known to have been erasured, we can perform metrology on the $N-t$ qubits where no erasures have occurred. 
Here, because of the symmetry of the probe state $|\varphi_1\>$, we may assume without loss of any generality that the erasures always occur on the first $t$ qubits.

To understand what exactly happens when $t$ qubits have been erased from our symmetric probe state $|\varphi_1\>$, we leverage on our ability to explicitly calculate what its corresponding density matrix is when $t$ qubits have been erased. 
This enables us to obtain an analytical lower bound on the \textsc{QFI} when $t$ qubits are lost.

We denote the density matrix for these $N-t$ qubits as $  \rho =\tr_t(|\varphi_1\>\<\varphi_1|),$
where $\tr_t(\cdot)$ denote the partial tracing of the first $t$ qubits. 
We start with a representation of $  \tr_t(|\varphi_1\>\<\varphi_1|)$ in terms of the vectors $ |\theta_0\> , \dots ,  |\theta_t\> $, where
\begin{align}
    |\theta_u\>  = \sum_{j=1} ^{n-1} \frac{ \sqrt{\binom n j} }{\sqrt {2^n} } |H^{N-t}_{gj-u}\> / \sqrt{\binom N{gj} }
\end{align}
and
$|H^{N}_{w} \>  = \sqrt{\binom {N}{w} } |D^{N}_{w} \>$
denotes a rescaling of the Dicke states such that each of its computational basis vector has unit amplitude.
For example, the vectors $|\theta_0\>, \dots, |\theta_t\>$ are pairwise orthogonal, but not orthonormal as seen from the following orthogonality relationship for $u,v = 0, \dots, t$.
\begin{align}
    \<\theta_u |\theta_v\>
    = \delta_{u,v}
     \sum_{j=1} ^{n-1}
     a_{j,u},
     \label{eq:psi-ortho}
\end{align}
where
\begin{align}
a_{j,u} = 
       \frac{ \binom n j }{ 2^n } 
     \frac{ \binom{N-t}{gj-u} } {\binom N{gj} } . \label{eq:aju defi}
\end{align}
Then we have the following lemma.
\begin{lemma}[Representation of the partial trace of our symmetric probe state]
\label{lem:erasure-representation}
Let $g$ and $n$ be positive integers and let $N=gn$. With these parameters, let $|\varphi_1\>$ denote our $N$-qubit symmetric probe state as defined in (\ref{eq:resource-state-general}). 
Let $t$ denote the number of erased qubits in $|\varphi_1\>\<\varphi_1|$, and $  \tr_t(|\varphi_1\>\<\varphi_1|)$ denote the corresponding density matrix obtained.
Then for all $t < g,n$, we have
\begin{align}
    &\tr_t(|\varphi_1\>\<\varphi_1|) \notag \\
    =& 
  \frac{1}{2^{n/2}}  \left(   |H^{N-t}_{0}\>\<\theta_0|  
+       |\theta_0\>\<H^{N-t}_{0}|  
  +      |H^{N-t}_{gn-t}\>\<\theta_t|  
    +   |\theta_t\>\<H^{N-t}_{gn-t}|\right) 
    \notag \\
    &+    
      \sum_{u=0}^t  \binom t u   |\theta_u\>\<\theta_u|
  +    \frac{1}{2^n} \left(  |H^{N-t}_{0}\>\<H^{N-t}_{0}|  +   |H^{N-t}_{gn-t}\>\<H^{N-t}_{gn-t}|  \right).\notag
\end{align}
\end{lemma}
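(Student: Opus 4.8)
The plan is to rewrite everything in terms of the unnormalized Dicke states $|H^N_w\rangle$ and to exploit their clean behaviour under splitting the register into the first $t$ qubits and the remaining $N-t$ qubits. The crucial elementary identity is that an unnormalized Dicke state factorizes as a sum of tensor products,
\[
|H^N_w\rangle = \sum_{a} |H^t_a\rangle \otimes |H^{N-t}_{w-a}\rangle,
\]
where $a$ counts the number of $1$'s placed among the first $t$ qubits; this merely reorganizes the sum over all weight-$w$ bit strings according to the weight of their first $t$ coordinates. Since $|D^N_w\rangle = |H^N_w\rangle/\sqrt{\binom{N}{w}}$, I would first express $|\varphi_1\rangle = \sqrt{2^{-n}}\sum_{j=0}^n \frac{\sqrt{\binom{n}{j}}}{\sqrt{\binom{N}{gj}}}|H^{N}_{gj}\rangle$ and then apply the splitting identity termwise.

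Next I would separate the boundary terms $j=0$ and $j=n$ from the interior terms $1\le j\le n-1$. The boundary weights $gj=0$ and $gj=N$ force $a=0$ and $a=t$ respectively, so they contribute only the single computational-basis factors $|H^t_0\rangle\otimes|H^{N-t}_0\rangle$ and $|H^t_t\rangle\otimes|H^{N-t}_{N-t}\rangle$. For the interior terms the hypothesis $t<g$ guarantees that $a$ runs over the full range $0,\dots,t$ (since $gj\ge g>t$ while $gj\le g(n-1)=N-g<N-t$), and moreover that every interior weight $gj-a$ lies strictly between $0$ and $N-t$, so it never collides with the boundary weights. Summing the interior contribution over $j$ at fixed $a$ reproduces exactly $\sqrt{2^n}\,|\theta_a\rangle$ by the definition of $|\theta_a\rangle$, the prefactors $\sqrt{2^{-n}}\cdot\sqrt{2^n}$ cancelling. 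Regrouping by the first-register factor $|H^t_a\rangle$ then yields a Schmidt-like expansion $|\varphi_1\rangle=\sum_{a=0}^t |H^t_a\rangle\otimes|\Phi_a\rangle$, where $|\Phi_a\rangle=|\theta_a\rangle$ for $1\le a\le t-1$, while $|\Phi_0\rangle=|\theta_0\rangle+2^{-n/2}|H^{N-t}_0\rangle$ and $|\Phi_t\rangle=|\theta_t\rangle+2^{-n/2}|H^{N-t}_{N-t}\rangle$.

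Finally I would take the partial trace. Because the first-register vectors are orthogonal with $\langle H^t_a|H^t_b\rangle=\binom{t}{a}\delta_{a,b}$, tracing out the first $t$ qubits gives
\[
\tr_t(|\varphi_1\rangle\langle\varphi_1|)=\sum_{a=0}^t\binom{t}{a}\,|\Phi_a\rangle\langle\Phi_a|.
\]
Substituting the three cases of $|\Phi_a\rangle$ and expanding the two boundary terms produces the diagonal sum $\sum_{a=0}^t\binom{t}{a}|\theta_a\rangle\langle\theta_a|$ (using $\binom{t}{0}=\binom{t}{t}=1$), the cross terms between $\{|\theta_0\rangle,|\theta_t\rangle\}$ and $\{|H^{N-t}_0\rangle,|H^{N-t}_{N-t}\rangle\}$ each with coefficient $2^{-n/2}$, and the pure terms $2^{-n}(|H^{N-t}_0\rangle\langle H^{N-t}_0|+|H^{N-t}_{N-t}\rangle\langle H^{N-t}_{N-t}|)$, which is precisely the claimed expression after identifying $N-t=gn-t$. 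I expect the main obstacle to be purely organizational: carefully tracking the admissible range of $a$ in the splitting identity and verifying that the boundary weights stay disjoint from the interior weights, both of which rest on the inequality $t<g$; once this bookkeeping is settled the computation is a direct expansion.
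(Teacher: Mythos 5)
Your proposal is correct and follows essentially the same route as the paper's proof: both rest on splitting the unnormalized Dicke states $|H^N_w\rangle$ across the first $t$ qubits versus the rest and on the orthogonality $\langle H^t_a|H^t_b\rangle=\binom{t}{a}\delta_{a,b}$, with the boundary terms $j=0,n$ handled separately using $t<g$. The only difference is organizational — you perform the splitting at the level of the ket to get a Schmidt-like form $\sum_a|H^t_a\rangle\otimes|\Phi_a\rangle$ before tracing, whereas the paper traces each operator $|H^N_{gj}\rangle\langle H^N_{gk}|$ individually — and both yield the stated expression.
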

\begin{proof}
First note that
\begin{align}
    |\varphi_1\>\<\varphi_1| &= 
    \sum_{j,k=0}^{n}\frac{\sqrt{\binom n j \binom n k}}{2^n} |D^N_{gj}\>\<D^N_{gk}|   
    \notag\\
    &=    \sum_{j,k=0}^{n}\frac{\sqrt{\binom n j \binom n k }}{2^n \sqrt{ \binom N{gj} \binom N{gk} }}|H^N_{gj}\>\<H^N_{gk}| .\notag
\end{align}
When $1\le j,k \le n-1$,
\begin{align}
    |H^N_{gj}\>\<H^N_{gk}| &= 
    \sum_{u,v=0}^{t} 
    \left(|H^{t}_{u}\> \otimes |H^{N-t}_{gj-u}\>\right)
    \left(\<H^{t}_{v}| \otimes \<H^{N-t}_{gk-v}|\right)    .\notag
\end{align}
By performing the partial trace in the Dicke basis and using the orthogonality condition $\<D^{t}_{u}|H^{t}_{v}\> = \delta_{u,v} \sqrt{\binom t u} $, we get for  $1\le j,k \le n-1$ that
\begin{align}
        \tr_t \left( |H^N_{gj}\>\<H^N_{gk}| \right) &= 
    \sum_{u=0}^t 
 \binom t u    |H^{N-t}_{gj-u}\>  \<H^{N-t}_{gk-u}|. 
\end{align}
Also note that for $j=1,\dots, n-1$, we have
\begin{align}
        \tr_t \left( |H^N_{0}\>\<H^N_{gn}| \right) &=        \tr_t \left( |H^N_{gn}\>\<H^N_{0}| \right)  = 0\notag\\
        \tr_t \left( |H^N_{0}\>\<H^N_{g j}| \right) &=         |H^{N-t}_{0}\>\<H^{N-t}_{gj}|  \notag\\
        \tr_t \left( |H^N_{gj}\>\<H^N_{0}| \right) &=         |H^{N-t}_{gj}\>\<H^{N-t}_{0}| \notag \\
        \tr_t \left( |H^N_{gn}\>\<H^N_{g j}| \right) &=         |H^{N-t}_{gn-t}\>\<H^{N-t}_{gj-t}|\notag  \\
        \tr_t \left( |H^N_{gj}\>\<H^N_{gn}| \right) &=         |H^{N-t}_{gj-t}\>\<H^{N-t}_{gn-t}|\notag\\
        \tr_t \left( |H^N_{0}\>\<H^N_{0}| \right) &=         |H^{N-t}_{0}\>\<H^{N-t}_{0}|\notag  \\
        \tr_t \left( |H^N_{gn}\>\<H^N_{gn}| \right) &=         |H^{N-t}_{gn-t}\>\<H^{N-t}_{gn-t}|  .\notag
\end{align}
Hence 
\begin{align}
   & \tr_t(|\varphi_1\>\<\varphi_1|) \notag\\
    =& 
\frac{1}{2^n}   \sum_{j=1}^{n-1} \sqrt { \frac{\binom n j}{\binom N {gj} } } \left(   |H^{N-t}_{0}\>\<H^{N-t}_{gj}|  
+       |H^{N-t}_{gj}\>\<H^{N-t}_{0}|  
  \right.\notag\\
  &\quad+\left.|H^{N-t}_{gn-t}\>\<H^{N-t}_{gj-t}|  
    +   |H^{N-t}_{gj-t}\>\<H^{N-t}_{gn-t}|\right) 
    \notag \\
    &+    
\frac{1}{2^n}    \sum_{j,k=1}^{n-1}
    \frac{\sqrt{\binom n j \binom n k }}{ \sqrt{ \binom N{gj} \binom N{gk} }}
   \sum_{u=0}^t  \binom t u    |H^{N-t}_{gj-u}\>  \<H^{N-t}_{gk-u}|
   \notag\\
   &
   +
   \frac{1}{2^n} \left(  |H^{N-t}_{0}\>\<H^{N-t}_{0}|  +   |H^{N-t}_{gn-t}\>\<H^{N-t}_{gn-t}|  \right),\notag
\end{align}
from which the result follows.
\end{proof}

\begin{figure}[!htb]
\includegraphics[width=1\columnwidth]{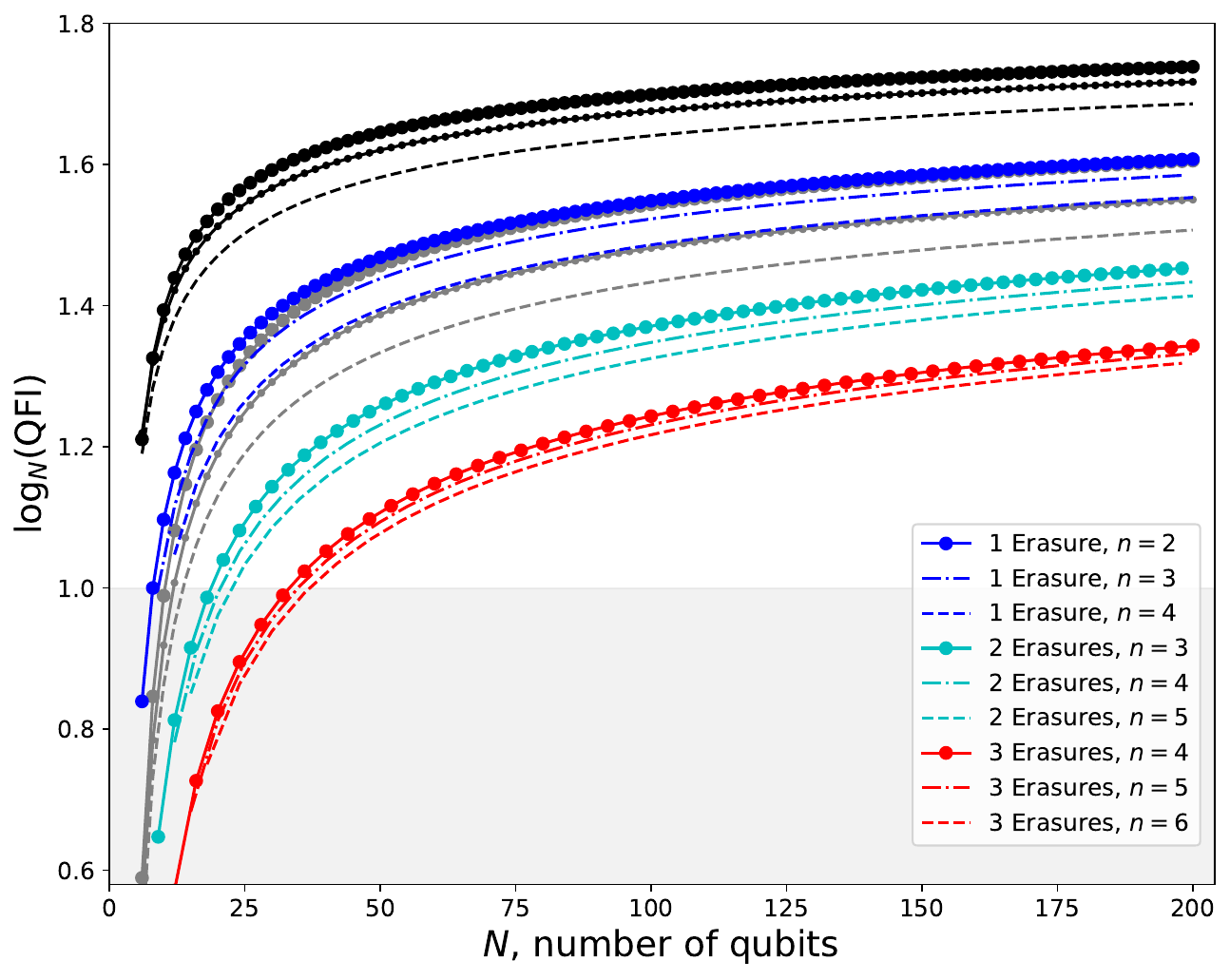}
\caption{Lower bounds for the \textsc{QFI} using the gnu probe state $|\varphi_1\>$ after 1,2 and 3 erasure errors for different values of $n$. When the lines are above 1, there is a quantum advantage.
The colored lines give non-asymptotic lower bounds for the QFI after $t$ erasures using Theorem \ref{thm:erasures-exact}.
The black and grey lines give asymptotic lower bounds for the QFI after $t$ erasures using Theorem \ref{thm:erasure-small-t}.
The black and grey lines correspond to $t=1$ and $t=2$ erasures respectively. 
The topmost black line, the middle black line and the bottom black line correspond to $n=2$, $n=3$ and $n=4$ respectively.
The topmost grey line, the middle grey line and the bottom grey line correspond to $n=3$, $n=4$ and $n=5$ respectively.
We show results for up to 200 qubits, which is the regime of interest in the NISQ era.
When roughly 100 qubits are available, and the rate of erasure is less than 2 percent, there is a discernible quantum advantage in using our symmetric probe state.
Note that the asymptotic lower bounds for $t=2$ match well with the non-asymptotic lower bounds for $t=1$.
}
\label{fig:bounds}
\end{figure}

Now we are in a position to calculate an analytical lower bound on the \textsc{QFI} given by $  \| [\rho ,H] \|_2^2 $. 
The uncorrupted probe state is the pure state $|\varphi_1\>\<\varphi_1|$, 
and when $t$ qubits are erased, the resultant probe state is the density matrix $\rho  = \tr_t(|\varphi_1\>\<\varphi_1|)$.
The generator corresponds only to the unerased qubits, 
and is given by $H=  \sum_{i=1}^{N-t} Z_i$.

Now let 
\begin{align}
   A_u &=  \sum_{j=1}^{n-1} a_{j,u} \notag\\
   B_u &=  \sum_{j=1}^{n-1} a_{j,u}b_{j,u} \notag\\
   C_u &=  \sum_{j=1}^{n-1} a_{j,u}c_{j,u}  \label{eq:ABC},
\end{align}
and $a_{j,u}$ is as given in (\ref{eq:aju defi}), and 
\begin{align}
b_{j,u} &=  \frac{4(gj-u)(N-t-gj+u)}{(N-t)(N-t-1)} ,
\notag\\
c_{j,u} &=  \frac{2(gj-u)}{N-t}.\label{eq:abc}
\end{align}
Now, we present our result for the generator lower bound to the \textsc{QFI} after our gnu probe state has suffered $t$ erasures.
\begin{theorem}
\label{thm:erasures-exact}[$t$ erasures]
Let $g,n$ be positive integers, $N=gn$ and let $t<g,n$. 
Let $|\varphi_1\>$ as defined in (\ref{eq:resource-state-general}) denote our gnu probe state. Let $\rho= \tr_t(|\varphi_1\>\<\varphi_1|)$. 
Then the QFI of $\rho$ is at least 
\begin{align}
               &
\frac{2(N-t)^2}{2^n}  \left(  4A_{t} -B_0   - B_{t} +  2C_{0} -2 C_{t}   \right)\notag\\
&
+
\frac{2(N-t)}{2^n}  \left(  B_{0} + B_{t} \right)
             +
2(N-t)^2  \sum_{u=0}^t  \binom t u^2 K_u,
\label{eq:main technical result for erasures}
\end{align}
where
\begin{align}
K_u &= \left( 2 A_u C_u-\left( 1 - \frac{1}{N-t}\right)A_u B_u - C_u^2 \right).
\end{align}
\end{theorem}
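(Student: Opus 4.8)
The plan is to evaluate the generator bound $\|[\rho,H]\|_2^2 = 2\tr(\rho^2 H^2) - 2\tr(\rho H\rho H)$ of \eqref{eq:QFI bound} directly, feeding in the explicit form of $\rho=\tr_t(|\varphi_1\>\<\varphi_1|)$ supplied by Lemma~\ref{lem:erasure-representation}. Writing $M=N-t$, the starting point is that on the unerased qubits $H=\sum_{i=1}^{M}Z_i$ acts diagonally in the symmetric subspace, $H|D^{M}_w\>=(M-2w)|D^{M}_w\>$, so each rescaled Dicke vector $|H^{M}_w\>$ is an eigenvector with eigenvalue $\lambda_w:=M-2w$. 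I would first record the two reparametrizations $\lambda_{gj-u}=M(1-c_{j,u})$ and $\lambda_{gj-u}^2=M^2-M(M-1)b_{j,u}$, which are immediate from \eqref{eq:abc} and are exactly what make $b_{j,u},c_{j,u}$ encode the action of $H$ and $H^2$.

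Next I would reduce both traces to matrix elements of $H$ and $H^2$ among the finite family of vectors occurring in Lemma~\ref{lem:erasure-representation}: the two extremal states $|H^{M}_0\>$, $|H^{M}_{gn-t}\>$ and the vectors $|\theta_0\>,\dots,|\theta_t\>$. The structural fact here, which uses $t<g$, is that the weights $\{gj-u:1\le j\le n-1,\ 0\le u\le t\}$ are pairwise distinct and lie strictly between $0$ and $M$; hence the Dicke components of distinct building blocks lie in distinct one-dimensional weight sectors, so all three families of matrix elements are diagonal in the building-block label. Concretely, $\<\theta_u|\theta_v\>=\delta_{uv}A_u$ as in \eqref{eq:psi-ortho}, with $|H^{M}_0\>,|H^{M}_{gn-t}\>$ unit and orthogonal to everything else; feeding $\lambda_{gj-u}=M(1-c_{j,u})$ into the defining sums of \eqref{eq:ABC} gives $\<\theta_u|H|\theta_u\>=M(A_u-C_u)$ together with $\<H^{M}_0|H|H^{M}_0\>=M$ and $\<H^{M}_{gn-t}|H|H^{M}_{gn-t}\>=-M$; and feeding $\lambda_{gj-u}^2$ likewise gives $\<\theta_u|H^2|\theta_u\>=M^2A_u-M(M-1)B_u$ with $M^2$ on both extremal vectors.

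With these elements tabulated, substituting the seven-term expansion of $\rho$ into $\tr(\rho^2 H^2)$ and $\tr(\rho H\rho H)$ leaves only the dyad pairs $|L_a\>\<R_a|$, $|L_b\>\<R_b|$ that close up, i.e. with $R_a=L_b$ and $R_b=L_a$ as building blocks. Enumerating these shows that the two cross terms coupling $|H^{M}_0\>$ and $|\theta_0\>$ pair only with each other (producing an $A_0$ overlap times an extremal $H$- or $H^2$-element), and likewise the pair coupling $|H^{M}_{gn-t}\>$ and $|\theta_t\>$; the extremal diagonal terms and each $|\theta_u\>\<\theta_u|$ contribute only against themselves. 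This yields closed forms for both traces in terms of $A_u,B_u,C_u,M$, in which the $u=0$ contribution carries a factor $\<H^{M}_0|H|H^{M}_0\>=+M$ while the $u=t$ contribution carries $\<H^{M}_{gn-t}|H|H^{M}_{gn-t}\>=-M$.

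Finally I would form $2\tr(\rho^2 H^2)-2\tr(\rho H\rho H)$. The pure extremal contributions of order $2^{-2n}M^2$ cancel between the two traces; the opposite signs of the two extremal eigenvalues make the $A_0$ terms cancel while the $A_t$ terms reinforce, and combining the $u=0,t$ pieces with the cross terms produces the first line $\tfrac{2M^2}{2^n}(4A_t-B_0-B_t+2C_0-2C_t)+\tfrac{2M}{2^n}(B_0+B_t)$. For generic $u$, the identity $M^2A_u^2-M^2(A_u-C_u)^2=M^2(2A_uC_u-C_u^2)$ together with the surviving $M(M-1)A_uB_u$ term collapses the diagonal part into $2M^2\sum_{u=0}^{t}\binom t u^2 K_u$ with $K_u$ as defined, completing the bound. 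I expect the main obstacle to be the bookkeeping in the third step: correctly matching the cross terms between the extremal vectors and $|\theta_0\>,|\theta_t\>$, tracking the crucial sign from $H|H^{M}_{gn-t}\>=-M|H^{M}_{gn-t}\>$, and not double-counting the closed loops, since a slip there corrupts precisely the $4A_t$, $2C_0$ and $2C_t$ coefficients on the first line.
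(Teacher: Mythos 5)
Your proposal is correct and follows essentially the same route as the paper's proof in Appendix~\ref{app:A}: both substitute the seven-dyad representation of Lemma~\ref{lem:erasure-representation} into $2\tr(\rho^2H^2)-2\tr(\rho H\rho H)$, use the disjointness of the weight sectors $gj-u$ (which is where $t<g$ enters) to close the loops, and reduce everything to the sums $A_u,B_u,C_u$, with your sign bookkeeping for the $4A_t,\,2C_0,\,-2C_t$ coefficients and the collapse to $K_u$ matching the paper's \eqref{eq:4} and \eqref{eq:tr rho h rho h}. The only cosmetic difference is that you evaluate the matrix elements of $H$ and $H^2$ via the eigenvalue identity $H|D^{N-t}_w\>=(N-t-2w)|D^{N-t}_w\>$ and the reparametrizations $\lambda_{gj-u}=(N-t)(1-c_{j,u})$, $\lambda_{gj-u}^2=(N-t)^2-(N-t)(N-t-1)b_{j,u}$, whereas the paper reaches the identical expressions through permutation invariance and the one- and two-body Dicke expectation values \eqref{eq:dickeZ1}--\eqref{eq:dickeZ2}.
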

The proof of this while straightforward, is a very tedious calculation. We supply the full details in Appendix \ref{app:A}.
Theorem \ref{thm:erasures-exact} gives a lower bound on the quantum Fisher information when some erasures have occurred on our symmetric probe state, and this lower bound is easy to evaluate.

Armed with an expression which yields a lower bound for the quantum Fisher information under erasures
in Theorem \ref{thm:erasures-exact}, 
we can find an asymptotical lower bound for the quantum Fisher information, in the limit when the number of qubits $N=gn$ that make up our permutation-invariant probe state becomes arbitrarily large.  

In order for our asymptotic analysis to work, the number of erasures $t$ and the number of levels $n$ is constant, while $g$ is taken to be arbitrarily large.

We begin with calculations of the asymptotics of the parameters $a_{j,u}, b_{j,u}$ and $c_{j,u}$ in the limit of large $g$.
\begin{lemma}\label{lem:parameter-asymptotics}
Let $t$ be a positive integer. Then for all $u = 0,\dots, t$ and $j = 1,\dots, n-1$ we have
\begin{align}
 \bar a_{j,u} =  \lim_{g \to \infty} a_{j,u} 
    &= \binom n j 2^{-n} 
    \left( 1 - \frac j n \right)^{t-u} \left(\frac j n \right)^u  \label{eq:a-limit}
 \\
\bar   b_j =  \lim_{g \to \infty} b_{j,u} 
    &= 
    \frac {4j} n \left( 1 - \frac j n \right) \label{eq:b-limit}
 \\
\bar c_j    = \lim_{g \to \infty} c_{j,u} 
    &=  \frac {2j} n . \label{eq:c-limit}
    \end{align}
\end{lemma}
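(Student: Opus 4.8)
The goal is to establish the three limits in Lemma~\ref{lem:parameter-asymptotics}. My plan is to treat each parameter separately, starting from the exact definitions in \eqref{eq:aju defi} and \eqref{eq:abc}, and substituting $N=gn$ throughout so that every quantity becomes an explicit function of $g$. The key observation is that since $t,n,u$ are held constant while $g\to\infty$, each binomial coefficient and each polynomial factor is a ratio of polynomials in $g$, and I only need to track the leading-order behaviour.

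First I would handle $\bar c_j$, which is the easiest. From $c_{j,u} = 2(gj-u)/(N-t) = 2(gj-u)/(gn-t)$, I divide numerator and denominator by $g$ to get $(2j - 2u/g)/(n - t/g)$, and both correction terms vanish as $g\to\infty$, leaving $2j/n$. The same manoeuvre handles $\bar b_j$: writing
\begin{align}
b_{j,u} = \frac{4(gj-u)(N-t-gj+u)}{(N-t)(N-t-1)} = \frac{4(gj-u)\bigl(g(n-j)-t+u\bigr)}{(gn-t)(gn-t-1)},\notag
\end{align}
I divide numerator and denominator by $g^2$, so each of the four linear factors contributes its leading coefficient: the numerator tends to $4j(n-j)$ and the denominator to $n^2$, giving $\bar b_j = 4j(n-j)/n^2 = \tfrac{4j}{n}(1-\tfrac{j}{n})$, as claimed. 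Note that in both cases the limit is independent of $u$, which is why the statement drops the $u$ subscript on $\bar b_j$ and $\bar c_j$.

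The main obstacle is $\bar a_{j,u}$, since it involves a genuine ratio of binomial coefficients, $\binom{N-t}{gj-u}/\binom{N}{gj}$, rather than just polynomials. The cleanest route is to expand this ratio as a product of linear factors. Writing it as a quotient of falling factorials and cancelling, I would express $\binom{N-t}{gj-u}/\binom{N}{gj}$ as a finite product whose number of factors is fixed by $t$ and $u$ (independent of $g$): roughly, $t$ factors of the form $(gj)(gj-1)\cdots$ over $N(N-1)\cdots$ coming from the "missing" entries, and similarly for the lower index. More concretely, I would use the standard identity that $\binom{N-t}{m}/\binom{N}{m+u}$ can be rewritten so that $u$ factors of the numerator behave like $gj-(\text{const})$ and $t-u$ factors behave like $N-gj-(\text{const}) = g(n-j)-(\text{const})$, with $t$ factors of $N-(\text{const})$ in the denominator. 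Each numerator factor of the first type divided by a denominator factor tends to $gj/gn = j/n$, contributing the $(j/n)^u$; each of the second type tends to $g(n-j)/gn = 1-j/n$, contributing the $(1-j/n)^{t-u}$; and the prefactor $\binom{n}{j}2^{-n}$ survives unchanged. Multiplying these together yields exactly \eqref{eq:a-limit}.

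The delicate bookkeeping is getting the exact count of factors right and confirming the constant shifts (like $-u$ and $-t$) are negligible in the limit. I would verify the factor count by a small consistency check: summing $\bar a_{j,u}$ over $u$ should, by the binomial theorem applied to $((1-j/n)+(j/n))^t = 1$ together with $\binom{t}{u}$, recover $\lim_{g\to\infty}\binom{n}{j}2^{-n}\sum_u\binom{t}{u}(1-\tfrac{j}{n})^{t-u}(\tfrac{j}{n})^u = \binom{n}{j}2^{-n}$, which matches the large-$g$ limit of $\sum_u \binom{t}{u} a_{j,u}$ obtained from the trace-preservation structure of Lemma~\ref{lem:erasure-representation}; this sanity check pins down any off-by-one errors in the exponents. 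Once the factor-counting is settled, each of the three limits follows by taking $g\to\infty$ factor by factor, since a finite product of convergent sequences converges to the product of the limits.
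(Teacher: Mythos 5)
Your proposal is correct and follows essentially the same route as the paper: the key step for $\bar a_{j,u}$ is exactly the falling-factorial identity $\binom{N-t}{gj-u}/\binom{N}{gj} = (gj)_u\,(N-gj)_{t-u}/(N)_t$, i.e.\ the decomposition into $u$ factors $\sim gj$, $t-u$ factors $\sim g(n-j)$, and $t$ denominator factors $\sim N$, each ratio tending to $j/n$ or $1-j/n$, while $b_{j,u}$ and $c_{j,u}$ are handled by the same elementary division by powers of $g$. Your factor count is the correct one, so the ``delicate bookkeeping'' you flag resolves exactly as you anticipate, and the consistency check via $\sum_u \binom{t}{u}\bar a_{j,u}$ is a nice (if unnecessary) confirmation.
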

\begin{proof}
Now recall that 
$ a_{j,u}  = \binom n j 2^{-n} \binom {N-t}{gj-u}/\binom N {gj}$.
Since $0 \le u \le t$ and $u \le g$, we have 
\begin{align}
    \binom {N-t}{gj-u}/\binom N {gj}
    =
    \frac{(N-gj)_{t-u}}{(N)_{t}}(gj)_u,\notag
\end{align}
where $(N)_t = N(N-1) \dots (N-t+1)$ denotes the falling factorial.
It then follows that 
\begin{align}
    \lim_{n\to \infty }\binom {N-t}{gj-u}/\binom N {gj}
&=
    \frac{(N-gj)^{t-u}}{N^{t}} (gj)^u \notag\\
&=
    \left(1 - \frac j n \right)^{t-u} \left( \frac j n \right)^u,\notag
\end{align}
from which (\ref{eq:a-limit}) follows.

Now recall that $c_{j,u} = \frac{2(gj-u)}{N-t}$. Taking the limit of large $n$,
we get $\lim_{n \to \infty } c_{j,u} = \frac{2(gj)}{N} = \frac {2j}{n}.$
We similarly get the result for $\bar b_{j }$.
\end{proof}
From the above lemma, we obtain the asymptotic lower bounds on the quantum Fisher information when the number of erasures is small. We present results for a single erasure and two erasures explicitly in Theorem \ref{thm:erasure-small-t}, and it can be readily seen that for any constant number of erasures, we can similarly obtain lower bounds for the \textsc{QFI}.
\begin{theorem}[Asymptotics for one and two erasures]
\label{thm:erasure-small-t}
Let $t$ denote the number of erasures. Let $g$ be arbitrarily large, $n$ be constant. Define our metrological probe state to be based on the parameters $g$ and $n$ as defined in (\ref{eq:resource-state-general}). Then whenever $n>t$, the \textsc{QFI} with $t=1,2$ erasures is at least 
\begin{align}
\case{
 \frac{n-1}{n^2} (N^2  + (n-2)N - (n-1) ) & ,t=1 \\
 \frac{(N-2) (n-1) \left(N \left(3 n^2-n+6\right)+3 n^3-4 n^2+3 n-18\right)}{8 n^4} &,t=2 \\
}.\notag
\end{align}
\end{theorem}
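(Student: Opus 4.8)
The plan is to feed the large-$g$ limits of Lemma~\ref{lem:parameter-asymptotics} directly into the exact bound of Theorem~\ref{thm:erasures-exact} while retaining the explicit $N$-dependence of the prefactors, and then to evaluate the resulting closed form for $t=1$ and $t=2$. Concretely, I would replace $a_{j,u},b_{j,u},c_{j,u}$ by $\bar a_{j,u},\bar b_j,\bar c_j$, so that $A_u,B_u,C_u$ become the limiting sums $\bar A_u=\sum_{j=1}^{n-1}\bar a_{j,u}$, $\bar B_u=\sum_{j=1}^{n-1}\bar a_{j,u}\bar b_j$ and $\bar C_u=\sum_{j=1}^{n-1}\bar a_{j,u}\bar c_j$. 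Each summand is $\binom n j$ against a polynomial in $j/n$, so each sum is evaluated in closed form from the elementary binomial moment identities for $\sum_{j}\binom n j j^{k}$ (for $t=1$ one needs $k\le 3$, for $t=2$ up to $k\le 5$), after subtracting the $j=0$ and $j=n$ endpoint terms.

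Before grinding through the sums, I would exploit the reflection $j\mapsto n-j$, under which $\bar a_{j,u}\mapsto\bar a_{j,t-u}$, $\bar b_j$ is invariant, and $\bar c_j\mapsto 2-\bar c_j$. This yields $\bar A_u=\bar A_{t-u}$, $\bar B_u=\bar B_{t-u}$, and the identity $\bar C_u+\bar C_{t-u}=2\bar A_u$, which halves the number of independent sums and anticipates the cancellations. For $t=1$ this gives $\bar A_0=\bar A_1=\tfrac12-2^{-n}$, $\bar B_0=\bar B_1=\bar C_0=\tfrac{n-1}{2n}$, and $\bar C_1=\tfrac{n+1}{2n}-2^{1-n}$; substituting these into \eqref{eq:main technical result for erasures} with $\binom 10=\binom 11=1$ and simplifying produces exactly $\frac{n-1}{n^2}\bigl(N^2+(n-2)N-(n-1)\bigr)$.

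Two structural features must be checked. First, the exponentially small terms proportional to $2^{-n}$ have to cancel identically: for $t=1$ the $2^{-n}$ contributions from the leading line, from the $\tfrac{2(N-t)}{2^n}(B_0+B_t)$ line, and from the $K_u$ line combine into a multiple of $(N-1)\bigl[(N-1)+1-N\bigr]=0$, leaving a polynomial with no transcendental part. Second, the explicit $\tfrac{1}{N-t}$ inside $K_u$, once multiplied by $2(N-t)^2$, is precisely what supplies the correct linear-in-$N$ coefficient, so it cannot be dropped.

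I expect the main obstacle to be the $t=2$ case. There one needs $\bar A_0=\bar A_2$ and $\bar A_1$, the sums $\bar B_0=\bar B_2$ and $\bar B_1$, and $\bar C_0,\bar C_1,\bar C_2$ (constrained by $\bar C_0+\bar C_2=2\bar A_0$), requiring binomial moments up to $\sum_j\binom n j j^5$, while the weights $\binom 2u^2=1,4,1$ make assembling \eqref{eq:main technical result for erasures} considerably longer. The computation is mechanical once the moment sums are tabulated, but the bookkeeping—verifying again that every $2^{-n}$ term cancels and that the surviving rational function in $N$ collapses to the stated expression with denominator $8n^4$—is where essentially all of the effort lies.
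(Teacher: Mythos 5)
Your proposal follows essentially the same route as the paper's proof: substitute the large-$g$ limits of Lemma~\ref{lem:parameter-asymptotics} into the exact bound of Theorem~\ref{thm:erasures-exact}, evaluate the resulting sums in closed form via the binomial moment identities $\sum_j\binom{n}{j}j^k$, and simplify to the stated rational functions of $N$ and $n$; your reflection trick $j\mapsto n-j$ and the explicit bookkeeping of the $2^{-n}$ cancellations and of the $\tfrac{1}{N-t}$ term in $K_u$ are labor-saving refinements of the same computation rather than a different method. Your intermediate values for $t=1$ are correct and reproduce the stated bound (note that your $\bar A_1=\tfrac12-2^{-n}$ silently corrects what appears to be a typo in the paper's displayed value $\lim_{g\to\infty}A_1=\tfrac12+\tfrac{1}{2n}$).
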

\begin{proof}
Evaluating the limits of $A_j,B_j,C_j$ for $j=0,1,2$, we get using Lemma \ref{lem:parameter-asymptotics} that
\begin{align}
\lim_{g  \to \infty}A_0 &=  \sum_{j=1}^{n-1} \binom n j 2^{-n} (1-j/n) = \frac 1 2 - 2^{-n}\notag\\
\lim_{g  \to \infty}A_1 &=  \sum_{j=1}^{n-1} \binom n j 2^{-n} j/n = \frac 1 2 + \frac{1}{2n} \notag\\
\lim_{g  \to \infty}B_0 &=  \sum_{j=1}^{n-1} \binom n j 2^{-n} (1-j/n) (4j/n)(1-j/n) = \frac {1} {2} -  \frac{1}{2n}  \notag\\
\lim_{g  \to \infty}B_1 &=  \sum_{j=1}^{n-1} \binom n j 2^{-n} (j/n) (4j/n)(1-j/n) = \frac 1 2 -  \frac{1}{2n}  \notag\\
\lim_{g  \to \infty}C_0 &=  \sum_{j=1}^{n-1} \binom n j 2^{-n} (1-j/n)(2j/n) = \frac 1 2 -  \frac{1}{2n} \notag\\
\lim_{g  \to \infty}C_1 &=  \sum_{j=1}^{n-1} \binom n j 2^{-n} j/n(2j/n) = \frac 1 2 +  \frac{1}{2n} - \frac{1}{2^{n-1}} .\notag
\end{align}
and 
\begin{align} 
\lim_{g  \to \infty}A_2 &=  \sum_{j=1}^{n-1} \binom n j 2^{-n}  j^2/n^2 = \frac 1 4 + \frac{1}{4n} - 2^{-n} \notag\\ 
\lim_{g  \to \infty}B_2 &=  \sum_{j=1}^{n-1} \binom n j 2^{-n}  (j/n)^2 (4j/n)(1-j/n)  \notag\\ 
\lim_{g  \to \infty}C_2 &=  \sum_{j=1}^{n-1} \binom n j 2^{-n} j^2/n^2(2j/n) = \frac 1 4 + \frac{3}{4n} - 2^{-(n-1)}.\notag
\end{align}

Hence it follows using Theorem \ref{thm:erasures-exact} that for $t=1$, we have
\begin{align}
\lim_{g \to \infty }\left ( \tr (\rho^2 H^2) -   \tr( \rho H \rho H )  \right) 
\ge   
\frac{(N-1) (n-1) (N+n-1)}{2 n^2}.\notag
\end{align}
Twice of the above gives us a lower bound for the \textsc{QFI} according to (\ref{eq:QFI bound}). This gives us the result for $t=1$.
When $t=2$, we can use Lemma \ref{lem:parameter-asymptotics} to get

From this, for $t=2$, we get
\begin{align}
&\lim_{g \to \infty }\left ( \tr (\rho^2 H^2) -   \tr( \rho H \rho H )  \right)  \notag\\
\ge   &
\frac{(N-2) (n-1) \left(N \left(3 n^2-n+6\right)+3 n^3-4 n^2+3 n-18\right)}{16 n^4} \notag 
\end{align}
Again, twice of the above gives us a lower bound for the \textsc{QFI} according to (\ref{eq:QFI bound}). 
\end{proof}
Calculations for the \textsc{QFI} for larger values of $t$ get increasingly tedious, and we do not pursue this further here.

To interpret the results of Theorem \ref{thm:erasure-small-t} more explicitly, notice that when 
$t=1$ and $g$ is large, the \textsc{QFI} is at least 
\begin{align} 
\case{
\frac{1}{4}(N^2-1) \sim 0.25 N^2, & \quad n = 2\\
\frac{2}{9}(N^2+N-2) \sim 0.22 N^2, & \quad n = 3\\
\frac{3}{16}(N^2+2N-3) \sim 0.19 N^2, & \quad n = 4\\
\frac{4}{25}(N^2+3N-4) \sim 0.16 N^2, & \quad n = 5\\
}.
\end{align}
The results of Theorem \ref{thm:erasure-small-t} suggests that when $n$ is constant and when $g$ is large, 
provided that the number of erasures is strictly less than $n$, the \textsc{QFI} is lower bounded by a constant multiplied by $N^{2}$ which achieves a Heisenberg scaling. 
Indeed, one can show this by performing appropriate leading order analysis on the first term in (\ref{eq:main technical result for erasures}) on Theorem \ref{thm:erasures-exact}. 
 
We evaluate this numerically with $t=1,2,3$, for constant $n$ and increasing $g$ in Figure \ref{fig:bounds}. 
The probe state utilizes a maximum of 200 qubits in the plots.
When the number of qubits is greater than 25, there is a quantum advantage for metrology robust against 2 erasure errors.

\section{Dicke inner products, Krawtchouk polynomials, and binomial summations}
The purpose of this section
is to evaluate in advance quantities that will help us to bound analytical lower bounds on the \textsc{QFI} of our probe state with dephasing errors.
This is because to evaluate a lower bound based on \eqref{eq:QFI bound}, 
it suffices to understand the structure of various Dicke inner products.
To this end, we present the main result in this section, which gives an analytical form for the quantities
\begin{align}
v_1&=\<\varphi_2|Z_1|\varphi_2\>,\notag\\ 
v_2&=\<\varphi_2|Z_1 Z_2 |\varphi_2\>, \notag\\
v_3&=\<\varphi_2|Z_1 Z_2 Z_3|\varphi_2\>, \notag\\
v_4&=\<\varphi_2|Z_1 Z_2 Z_3  Z_4|\varphi_2\> ,
\label{eq:v-defi}
\end{align}
where $|\varphi_2\>$ is the probe state as given in (\ref{eq:resource-state-general}) with $u = 2$ and positive integer parameters $g$ and $n$.
Our result is the following.
\begin{lemma} \label{lem:probe-state-Z-inner-product}
Let $v_1,\dots ,v_4$ be as defined in \eqref{eq:v-defi}. Then 
\begin{align}
v_1 &= \frac{1}{2},\notag\\
v_2 &= \frac{g( n+1)-2}{4 g n-2} = \frac{1}{4} +  \frac{g-3/2}{4gn -2},\notag\\
v_3 &= \frac{g^2 n (n+3)-6 gn+2}{4 (g n-1) (2 g n-1)},\notag\\
v_4 &= \frac{v_{4,3} g^3 -v_{4,2} g^2 + v_{4,1} g - 12}{4 (g n-1) (2 g n-3) (2 g n-1)},
\end{align}
where 
$v_{4,3}=n^3+6 n^2+3 n-2$,
$v_{4,2}=12 n (n+1)$, and
$v_{4,1}=4(5n+2)$.
Furthermore, in the limit of large $g$, for $j=1,2,3,4$, we have 
\begin{align}
\lim_{g\to \infty } v_j &=  \frac{1}{2^j} + \gamma_j.
\label{eq:vs-glimit}
\end{align} 
where
$\gamma_2 = 1/(4n), \gamma_3 = 3/(8n)$ and $\gamma_4 = (6n^2+3n-2)/(16n^2).$
\end{lemma}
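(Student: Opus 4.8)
The plan is to exploit the fact that each operator $Z_1\cdots Z_m$ acts diagonally in the computational basis, multiplying $|x_1\cdots x_N\rangle$ by $(-1)^{x_1+\cdots+x_m}$, and in particular preserves Hamming weight. Since $|\varphi_2\rangle$ is a superposition of Dicke states $|D^{2gn}_{gj}\rangle$ whose weights $gj$ are pairwise distinct as $j$ ranges over $0,\dots,n$, every cross-weight term drops out and
\begin{align}
v_m = \sum_{j=0}^n \frac{\binom n j}{2^n}\, \langle D^{2gn}_{gj}| Z_1\cdots Z_m |D^{2gn}_{gj}\rangle .\notag
\end{align}
Thus the entire computation reduces to evaluating the diagonal Dicke matrix elements of $Z_1\cdots Z_m$ and then performing a binomial summation over $j$.

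First I would compute the diagonal element. Grouping the weight-$w$ basis states by the number $k$ of ones among the first $m$ qubits gives
\begin{align}
\langle D^{N}_{w}| Z_1\cdots Z_m |D^{N}_{w}\rangle = \frac{1}{\binom N w}\sum_{k=0}^m (-1)^k \binom m k \binom{N-m}{w-k},\notag
\end{align}
which is exactly a Krawtchouk polynomial of degree $m$ divided by $\binom N w$, and hence simplifies to a polynomial of degree $m$ in $w$. For the four cases at hand I would evaluate this explicitly: $m=1$ yields $1-2w/N$, $m=2$ yields $((N-2w)^2-N)/(N(N-1))$, and $m=3,4$ give the analogous cubic and quartic expressions, most conveniently written over the falling factorial $(N)_m=N(N-1)\cdots(N-m+1)$.

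Next I would substitute $w=gj$, $N=2gn$ and carry out the sum $\sum_{j=0}^n \binom n j 2^{-n}(\cdots)$. Because each diagonal element is a polynomial of degree $m\le 4$ in $w=gj$, this requires only the first four moments of the Binomial$(n,1/2)$ distribution, which I would supply through the factorial moments $\sum_{j} \binom n j 2^{-n}(j)_r = (n)_r/2^r$ for $r=1,2,3,4$. Collecting terms and cancelling the common factor $N/2=gn$ from numerator and denominator then yields the stated closed forms, with denominators $(N)_m/(gn)$ reducing to $4gn-2$, $4(gn-1)(2gn-1)$, and $4(gn-1)(2gn-3)(2gn-1)$ for $m=2,3,4$ respectively. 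As a sanity check, the $m=1,2$ cases already reproduce $v_1=\tfrac12$ and $v_2=(g(n+1)-2)/(4gn-2)$. Finally, the asymptotics \eqref{eq:vs-glimit} follow by reading off the ratio of the leading $g$-powers in each closed form: the highest-degree term gives $1/2^m$ and the next-order term produces $\gamma_m$.

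The conceptual reduction is immediate; the genuine labor — and the step most prone to error — is the $m=3$ and especially $m=4$ bookkeeping, where the cubic and quartic Krawtchouk polynomials combine with the third and fourth binomial moments to produce degree-$3$ and degree-$4$ polynomials in $g$ that must be carefully collected and factored into the stated numerators $v_{4,3}g^3-v_{4,2}g^2+v_{4,1}g-12$ and its cubic analogue. Organizing everything through factorial moments and the falling-factorial form of $(N)_m$ is what keeps this algebra tractable.
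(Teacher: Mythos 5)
Your proof follows essentially the same route as the paper's: reduce $v_m$ to diagonal Dicke matrix elements (the cross-weight terms vanish because the weights $gj$ are pairwise distinct), identify $\<D^N_w|Z_1\cdots Z_m|D^N_w\>$ as a Krawtchouk polynomial $K^N_w(m)/\binom{N}{w}$, rewrite it as a degree-$m$ polynomial in $w$, and finish with binomial moment sums over $j$ followed by reading off the leading orders in $g$. As a minor aside, your $m=2$ element $\bigl((N-2w)^2-N\bigr)/\bigl(N(N-1)\bigr)$ is the correct one; the paper's displayed Eq.~\eqref{eq:dickeZ2} carries a sign typo on its last two terms (check $w=1$, $N=2$), though the final closed forms for $v_2,\dots,v_4$ are consistent with the corrected expression.
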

Lemma \ref{lem:probe-state-Z-inner-product} 
can be shown once one understands how Dicke inner products when evaluated on Pauli operators can be expressed in terms of Krawtchouk polynomials.
We denote a binary Krawtchouk polynomial by
\begin{align}
K^N_k(z) =  \sum_{j = 0}^z \binom {z} {j} \binom{N-z}{k-j} (-1)^j. \label{eq:kpoly summation}
\end{align}
In the language of generating functions, 
\begin{align}
    K^N_k(z) = [x^k](1-x)^{N-z}(1+x)^z,
\end{align}
where $[x^k]f(x)$ denotes the coefficient of $x^k$ of a polynomial $f(x)$. 
Recall that a Dicke state is a normalized superposition on $m$ qubits of all permutations of computation basis vectors with $w$ 1s and $m-w$ 0s, by $|D^N_w\>$, which we have defined in \eqref{eq:dicke-defi}. 
Then, it can be seen that Dicke inner products of Paulis can be evaluated with the following lemma.
\begin{lemma}
\label{lem:dicke-inner}
Let $N$ be a positive integer, and $w,a$ be non-negative integers such that $w+a\le N$. 
Let $x,y,z$ be non-negative integers such that $x+y+z \le N$.
Let $P$ be any Pauli operator with $x$ $X$'s, $y$ $Y$'s and $z$ $Z$'s.
Then if $x+y-a$ is odd, $\<D^N_{w+a} | P |  D^N_w\> = 0$.
If $x+y-a$ is even, then 
\begin{align}
\<D^N_{w+a} | P |  D^N_w\>  
&= \frac{i^y}{\sqrt{ \binom N w \binom N {w+a}  }}
 K^{x+y}_{\frac{x+y-a}{2}}(y) K^{N-x-y}_{w-\frac{x+y-a}{2}}(z) .\notag
\end{align} 
\end{lemma}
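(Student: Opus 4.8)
The plan is to reduce the matrix element to an explicit combinatorial sum and then recognize that sum as a product of two Krawtchouk polynomials. First I would invoke the permutation invariance of the Dicke states $|D^N_{w}\>$ and $|D^N_{w+a}\>$: since both are symmetric under any permutation of the $N$ qubits, I may assume without loss of generality that $P$ acts as $X$ on qubits $1,\dots,x$, as $Y$ on qubits $x+1,\dots,x+y$, as $Z$ on qubits $x+y+1,\dots,x+y+z$, and as the identity on the remaining $N-x-y-z$ qubits. I then expand both Dicke states in the computational basis using \eqref{eq:dicke-defi}, so that $\<D^N_{w+a}|P|D^N_w\>$ becomes the normalization factor $1/\sqrt{\binom N w \binom N{w+a}}$ times a sum over weight-$w$ input strings, each weighted by the amplitude that $P$ produces.

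Next I would track the action of $P$ on a single computational basis string $|s\>$ of weight $w$. Write $p,q,r$ for the number of $1$'s of $s$ in the $X$-block, the $Y$-block and the $Z$-block respectively, and $\ell$ for the number of $1$'s in the identity block. The operators $X$ and $Y$ flip every bit in their blocks, so the output weight changes by $(x-2p)+(y-2q)$; requiring this to equal $a$ forces $p+q=\tfrac{x+y-a}{2}=:m$. This integrality requirement is precisely the parity condition: if $x+y-a$ is odd, no weight-$w$ string can be mapped to weight $w+a$, so every term vanishes and the matrix element is $0$. The phase is the crucial bookkeeping step: each $Y$ contributes $i$ acting on $|0\>$ and $-i$ acting on $|1\>$, giving $i^{y-q}(-i)^q=i^y(-1)^q$ from the $Y$-block, while each $Z$ contributes $(-1)$ per $1$, giving $(-1)^r$ from the $Z$-block. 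Hence each surviving term carries amplitude $i^y(-1)^{q+r}$, and critically the factor $i^y$ is independent of the string and pulls out of the sum.

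With $m$ fixed, I would organize the remaining sum by the block occupation numbers, using that $s\mapsto s'$ is a bijection onto distinct weight-$(w+a)$ strings so that $\<D^N_{w+a}|$ just reads off each amplitude. The sum then factorizes: the $X$- and $Y$-blocks contribute $\sum_{q}\binom{x}{m-q}\binom{y}{q}(-1)^q$ (from $p+q=m$), while the $Z$-block and identity block contribute $\sum_{r}\binom{z}{r}\binom{N-x-y-z}{(w-m)-r}(-1)^r$ (from the total weight being $w$, i.e.\ $r+\ell=w-m$). Comparing each factor against the defining summation \eqref{eq:kpoly summation}, the first is exactly $K^{x+y}_{m}(y)$ and the second is exactly $K^{N-x-y}_{w-m}(z)$. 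Restoring the normalization and the prefactor $i^y$ then yields the claimed formula, with out-of-range cases handled automatically since binomial coefficients with invalid arguments vanish.

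The main obstacle I anticipate is not conceptual but bookkeeping accuracy: correctly collapsing the $Y$-block phase to the clean prefactor $i^y$ times the real sign $(-1)^q$, and then matching the two block-occupation sums to the correct Krawtchouk indices, namely $(z\!=\!y,\,k\!=\!m)$ for the first factor and $(z\!=\!z,\,k\!=\!w-m)$ for the second in the notation of \eqref{eq:kpoly summation}. A secondary point to verify carefully is that the two sums genuinely decouple once $p+q=m$ is imposed, which holds because the weight constraint couples the four blocks only through the single total $w$, leaving the pairs $(p,q)$ and $(r,\ell)$ free to range within their separate constraints.
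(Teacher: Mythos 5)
Your proposal is correct and follows essentially the same route as the paper's proof: act on a single weight-$w$ computational basis string, extract the string-independent phase $i^y$ together with the sign $(-1)^{n_y+n_z}$, derive the parity/weight condition $2(n_x+n_y)=x+y-a$, and factorize the resulting binomial sum into the two Krawtchouk polynomials $K^{x+y}_{(x+y-a)/2}(y)$ and $K^{N-x-y}_{w-(x+y-a)/2}(z)$. The only cosmetic difference is that you fix the blocks by permutation invariance up front, whereas the paper works directly with the supports of the $X$, $Y$, $Z$ parts of $P$.
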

\begin{proof}
Let $|\theta_t\>$ be a computation basis vector on $N$ qubits with $w$ $|1\>$s and $N-w$ $|0\>$s.
Let $n_x,n_y,n_z$ denote the number of $|1\>$s $|\theta_t\>$ has on the support of the $X$, $Y$ and $Z$ part of $P$ respectively.
Clearly $0 \le n_x \le x$, $0 \le n_y \le y$, $0 \le n_z \le z$.
Moreover we are interested in $n_x+n_y+n_z \le w$. 
Clearly $P|\theta_t\>$ is up to a phase also a computation basis vector $|\phi\>$, and this phase is equal to $i^y (-1)^{n_y+n_z}$. 

We now proceed to count the number of $|1\>$s in the computation basis vector $|\phi\>$. Let this number be $n$. Then we have
\begin{align}
n &= (x-n_x) + (y-n_y) + n_z + (w-n_x-n_y-n_z) \notag\\
&=w+x+y-2(n_x+n_y).
\end{align}
If $n=w+a$, then we must have
\begin{align}
w+a &=w+x+y-2(n_x+n_y) \notag\\
2(n_x+n_y) &= x+y -a, 
\end{align}
and this implies that $x+y-a$ must be even.
This implies that if $x+y-a$ is odd, $\<D^N_{w+a} | P |  D^N_w\>  $ must be zero.

Now we can evaluate $\<D^N_{w+a} | P |  D^N_w\>  $ when $x+y-a$ is even.
Then we need only take into account all the ways of picking $|\theta_t\>$ for $P$ to act on, and for each of these instances, sum the appropriate phase. 
Now notice that $n_x = \frac{x+y-a}{2} - n_y$. 
Thus,
\begin{align}
\<D^N_{w+a} | P |  D^N_w\> 
=
\frac{1}{\sqrt{\binom N w \binom N {w+a} }}
\sum_{\substack{
n_x , n_y,n_z \ge 0\\
n_x+n_y+n_z \le w \\
n_x \le x \\
n_y \le y\\
n_z \le z\\
}}
i^y (-1)^{n_y+n_z}.
\end{align}
Expressing this in terms of binomial coefficients, 
and adopting the convention that 
$\binom n k = 0$ for negative $k$, 
we have
\begin{align}
&\<D^N_{w+a} | P |  D^N_w\>  \notag\\
=&
\frac{i^y}{\sqrt{\binom N w \binom N {w+a} }} 
\sum_{n_z=0}^z \binom{z}{n_z} \binom{N-x-y-z}{w-n_x-n_y-n_z}
\notag\\
&\times
(-1)^{n_z}
\sum_{n_y=0}^y \binom{y}{n_y} \binom{x}{ \frac{x+y-a}{2} - n_y}
 (-1)^{n_y} \notag\\
 =&
 \frac{i^y}{\sqrt{\binom N w \binom N {w+a} }} 
K^{N-x-y}_{w-n_x-n_y}(z)
K^{x+y}_{\frac{x+y-a}{2}}(y) .
\end{align}
Substituting $n_x+n_y = \frac{x+y-a}{2}$ then yields the result.
\end{proof}
 For the purpose of proving Lemma \ref{lem:probe-state-Z-inner-product}, it suffices to consider only Krawtchouk polynomials
 $K^N_w(z)$ for $z=0,1,2,3,4$.
Using Lemma \ref{lem:dicke-inner}, it is easy to see that 
\begin{align}
     \< D^N_{w}| Z_1\dots Z_z  |D^N_{w}\> &= K^N_w(z) / \binom N w. 
\end{align}
When $z \ge w$ in the expression $ K^N_w(z)$ and $ w\le m-z$, we use 
(\ref{eq:kpoly summation}) to express $\binom m w^{-1} K^N_w(z) $ as polynomials in $w$.
From this, we find that 
\begin{align}
 \< D^N_{w}| Z_1  |D^N_{w}\> &= 1-\frac{2w}{N}.\label{eq:dickeZ1} \\
  \< D^N_{w}| Z_1 Z_2 |D^N_{w}\> &=
  1 +   \frac{4w}{N-1} -  \frac{4w^2}{N(N-1)}.  
 \label{eq:dickeZ2}
  \end{align}
Similarly, we find
  \begin{align}
&    \< D^N_{w}| Z_1 Z_2 Z_3 |D^N_{w}\> \notag\\
  =&
1 + w\frac{-6 N^2+6 N-4}{N \left(N^2-3 N+2\right)} + \frac{12w^2}{N^2-3 N+2}\notag\\&
-\frac{8w^3}{N \left(N^2-3 N+2\right)},
 \label{eq:dickeZ3}
 \end{align}
 and denoting $(N-1)_3 = (N-1)(N-2)(N-3),$ $(N)_4 = N(N-1)(N-2)(N-3)$ as falling factorials we get
 \begin{align}
 & \< D^N_{w}| Z_1 Z_2 Z_3 Z_4 |D^N_{w}\>
  \notag\\
=&
  1-  \frac{8w \left(N^2-3 N+4\right)}{(N-1)_3} + \frac{8 w^2\left(3 N^2-3 N+4\right)}{(N)_4}
  \notag\\
& - \frac{32w^3}{(N-1)_3}+\frac{16w^4}{(N)_4}.
 \label{eq:dickeZ4}
\end{align} 
Given the identities \eqref{eq:dickeZ1}, \eqref{eq:dickeZ2}, \eqref{eq:dickeZ3} and \eqref{eq:dickeZ4}, we are in a position to prove Lemma \ref{lem:probe-state-Z-inner-product}.
\begin{proof}[Proof of Lemma \ref{lem:probe-state-Z-inner-product}]
It is clear that 
\begin{align}
v_1 &= \sum_{j=0}^n 2^{-n} \binom n j \<D^N_{gj}| Z_1 |D^N_{gj}\> \label{eq:expand-1}.
\end{align}
By substituting \eqref{eq:dickeZ1} into \eqref{eq:expand-1},
we find that 
\begin{align}
v_1 &= \sum_{j=0}^n 2^{-n} \binom n j \left( 1 - 2(gj)/N \right)	 \label{eq:expand-2}.
\end{align}
By using the binomial identities $\sum_{j=0}^n \binom n j = 2^n$ and 
$\sum_{j=0}^n \binom n j j = 2^{n-1}n $, and $N=2gn$, we get
\begin{align}
v_1 = 1 - gn / (2gn) = 1/2,
\end{align}
which completees the proof that $v_1 = 1/2$.
The proof of the remaining results can be found using a similar methodology, and using more binomial identities for evaluating $\sum_{j=0}^n \binom n j j^x$ for $x=0,1,2,3,4$. The asymptotic results for large $g$ can be found directly by taking limits of the analytical expressions we find for $v_1,v_2,v_3$ and $v_4$.
\end{proof}

\section{Dephasing errors}
An archetypal scenario of dephasing on multiple qubits is one where every qubit dephases with probability $p$,
which means that independently on every qubit, no error or a Pauli $Z$ error applies on each qubit with probability $1-p$ and $p$ respectively.
To allow robust metrology on using probe states that are allowed to accumulate a limited amount of dephasing errors, 
we propose to use an initial probe state $|\varphi_2\>$ given in (\ref{eq:resource-state-general}) where $u =2$ so that the total number of qubits used is $N = 2gn$.
This proposed probe state uses twice as many as that used for erasure errors for fixed $g$ and $n$.
Explictly, the dephased $N$-qubit probe state is 
\begin{align}
  \sigma &= 
    \sum_{j=0}^N   p^j (1-p)^{N-j}
    \sum_{       {\bf x}  \in B_j  } 
    Z_{{\bf x}}     |\varphi_2\>\<\varphi_2|     Z_{{\bf x}}  
    \label{eq:dephasing definition},
\end{align}
where $p$ denotes the probability that each qubit is dephased, $B_j$ denotes the set of all $N$-bit binary vectors with $j$ 1s, and 
$Z_{{\bf x}}=  Z^{x_1} \otimes \dots \otimes Z^{x_N}$ for every binary vector ${\bf x} = (x_1,\dots , x_N)$.
We consider the case where there are on average $t= pN$ phase errors, and consider the limit of large $N$ where the average number of phase errors is held constant.
In this asymptotic limit, $p$ vanishes, and this allows us to side-step the impossibility of achieving a quantum advantage using robust quantum meterology when $p$ is constant and $N$ becomes large \cite{DemkowiczDobrzaski2012}. 

We turn our attention to a different dephasing channel $\mathcal D_\lambda$, where no phase error occurs with probability $\lambda$ and one phase error occurs on each qubit with probability $(1-\lambda)/N$.
Here, $Z_j$ denotes a multi-qubit Pauli that applies $Z$ on the $j$th qubit and acts trivially on the remaining qubits.
Here, the dephased probe state is  
\begin{align}
\mathcal D_\lambda(  |\varphi_2\>\<\varphi_2| ) = 
\lambda |\varphi_2\>\<\varphi_2| 
+ \frac{1-\lambda}{N} 
\sum_{j=1}^N Z_j |\varphi_2\>\<\varphi_2|Z_j .
\end{align}
Our first result pertains a lower bound on the QFI of $\mathcal D_\lambda(  |\varphi_2\>\<\varphi_2| )$, and we present it in the following theorem. 
\begin{theorem}
\label{thm:single-qubit-dephasing-non-asymptotic}
Let $g$ and $n$ be positive integers, and $N=2gn$. Then the QFI of $\mathcal D_\lambda(|\varphi_2\>\<\varphi_2|)$ is at least
\begin{align}
2g^2 n \lambda^2 + 2\lambda(1-\lambda)\frac{g^2(n-1)}{2}  
+ (1-\lambda)^2 \frac{g_2}{N^2}.\notag
\end{align}
where $g_2 = 2(z_{0,2,2}-z_{1,1,2})$
and 
\begin{align}
z_{0,2,2} =& N(N-1) v_2 
\left( (N-2)(N-3)v_4 + (5N-8)v_2 +2 \right) \notag\\
&+ N^2 + N^2(N-1) v_2,\notag\\
z_{1,1,2} =& 
 N(N-1)(N-2)^2   v_3^2 + N(N-1) \notag\\
&  + 2 N(N-1)(N-2) v_3 + N^3/4,
\end{align}
and $v_j$ are as given in Lemma \ref{lem:probe-state-Z-inner-product}.
\end{theorem}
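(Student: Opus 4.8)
The plan is to feed the dephased state $\rho=\mathcal{D}_\lambda(|\varphi_2\rangle\langle\varphi_2|)$ directly into the generator lower bound, namely $\mathrm{QFI}\ge \|[\rho,H]\|_2^2 = 2\tr(\rho^2H^2)-2\tr(\rho H\rho H)$ from \eqref{eq:QFI bound}, and to exploit that $\rho$ is an \emph{explicit convex combination of rank-one projectors}. Writing $|\psi_0\rangle=|\varphi_2\rangle$ and $|\psi_j\rangle=Z_j|\varphi_2\rangle$ for $j=1,\dots,N$, we have $\rho=\sum_{\alpha=0}^N c_\alpha|\psi_\alpha\rangle\langle\psi_\alpha|$ with $c_0=\lambda$ and $c_j=(1-\lambda)/N$. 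Because each term is rank one, both traces collapse into a bilinear sum of scalar overlaps,
\begin{align}
\|[\rho,H]\|_2^2 = 2\sum_{\alpha,\beta=0}^N c_\alpha c_\beta\Big(\langle\psi_\alpha|\psi_\beta\rangle\langle\psi_\beta|H^2|\psi_\alpha\rangle - |\langle\psi_\alpha|H|\psi_\beta\rangle|^2\Big).\notag
\end{align}
Since $H=\sum_k Z_k$ commutes with every $Z_j$, each of the three overlaps $\langle\psi_\alpha|\psi_\beta\rangle$, $\langle\psi_\beta|H^2|\psi_\alpha\rangle$, $\langle\psi_\alpha|H|\psi_\beta\rangle$ is an expectation value of a $Z$-string in $|\varphi_2\rangle$, and by the permutation-invariance of $|\varphi_2\rangle$ every such string reduces to one of the numbers $1,v_1,v_2,v_3,v_4$ supplied by Lemma \ref{lem:probe-state-Z-inner-product}. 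I would verify at the outset that the theorem's three-term expression is \emph{exactly} $\|[\rho,H]\|_2^2$, so the only inequality used is $\mathrm{QFI}\ge\|[\rho,H]\|_1^2\ge\|[\rho,H]\|_2^2$.

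Next I would split the double sum by its $\lambda$-weight. The diagonal $\alpha=\beta=0$ term is $2\lambda^2\,\mathrm{Var}_{|\varphi_2\rangle}(H)$; using $\langle H\rangle=Nv_1=gn$ and $\langle H^2\rangle=N+N(N-1)v_2$ together with $N=2gn$, the variance simplifies to $g^2n$, producing the first term $2g^2n\lambda^2$. For the cross terms, combining $(0,j)$ with $(j,0)$ and summing over $j$ gives $4\lambda(1-\lambda)\,f_\times$ with $f_\times=\tfrac12\langle\varphi_2|Z_jH^2|\varphi_2\rangle-(\langle\varphi_2|HZ_j|\varphi_2\rangle)^2$. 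Expanding $H$ and $H^2$ and collecting by index coincidences yields $\langle\varphi_2|Z_jH^2|\varphi_2\rangle=(3N-2)v_1+(N-1)(N-2)v_3$ and $\langle\varphi_2|HZ_j|\varphi_2\rangle=1+(N-1)v_2$, and substituting the closed forms of $v_1,v_2,v_3$ collapses $f_\times$ to $g^2(n-1)/4$, giving the middle term $2\lambda(1-\lambda)g^2(n-1)/2$.

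The substantial work is the $(1-\lambda)^2$ block $\tfrac{2}{N^2}\sum_{j,k=1}^N(\cdots)$. Here I would show that $\sum_{j,k}\langle\psi_j|\psi_k\rangle\langle\psi_k|H^2|\psi_j\rangle=z_{0,2,2}$ and $\sum_{j,k}|\langle\psi_j|H|\psi_k\rangle|^2=z_{1,1,2}$, so that the block equals $(1-\lambda)^2\,g_2/N^2$. Each summand is an expectation of a product of at most four $Z$'s, e.g.\ $\langle\varphi_2|Z_pZ_qZ_jZ_k|\varphi_2\rangle$, which collapses to $1,v_1,v_2,v_3$ or $v_4$ according to how many distinct qubit indices survive after applying $Z_i^2=I$. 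The diagonal $j=k$ contributions yield the $N^2+N^2(N-1)v_2$ piece (from $\langle H^2\rangle$) and the $N^3/4$ piece (from $\langle H\rangle^2$), while the off-diagonal $j\ne k$ contributions, partitioned by the size of $\{p,q\}\cap\{j,k\}$, yield the $N(N-1)v_2\big((N-2)(N-3)v_4+(5N-8)v_2+2\big)$ piece and the $N(N-1)+2N(N-1)(N-2)v_3+N(N-1)(N-2)^2v_3^2$ piece. Assembling the three blocks reproduces the claimed bound.

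The main obstacle is precisely this coincidence-counting in the $(1-\lambda)^2$ block: one must partition the four-index sums over $p,q,j,k$ by their overlap pattern, track which reduced Pauli strings map to $v_1,\dots,v_4$ and with what multiplicities, and confirm that the resulting large polynomials in $g,n$ (after setting $N=2gn$) match $z_{0,2,2}$ and $z_{1,1,2}$. Everything outside this bookkeeping is a mechanical substitution of Lemma \ref{lem:probe-state-Z-inner-product}.
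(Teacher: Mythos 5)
Your proposal is correct and follows essentially the same route as the paper: apply the generator bound $2\tr(\rho^2H^2)-2\tr(\rho H\rho H)$ to the explicit convex decomposition of $\mathcal D_\lambda(|\varphi_2\>\<\varphi_2|)$, split by powers of $\lambda$ and $(1-\lambda)$, and reduce every term to the inner products $v_1,\dots,v_4$ via permutation invariance and coincidence counting (your bilinear sums over $|\psi_\alpha\>$ are exactly the paper's quantities $A_{u,v,w}$). Your intermediate identities, e.g.\ $\<\varphi_2|Z_jH^2|\varphi_2\>=(3N-2)v_1+(N-1)(N-2)v_3$ and the diagonal/off-diagonal split of $z_{0,2,2}$ and $z_{1,1,2}$, all check out against the stated $z_{i,j,k}$.
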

\begin{proof}

Now for every tuple of integers ${\bf i} = (i_1,\dots i_u)$
let $\bar Z_{\bf i} = Z_{i_1} \dots Z_{i_u}$.
Now for every positive integer $u,v$ and $w$, let \begin{align}
    A_{0,v,0} &= 
    \sum_{\substack{ 
     {\bf i} \in \{1,\dots,N\}^v \\
     }}
    \<\varphi_2 | \bar Z_{\bf i} |\varphi_2\>, \notag
    \\
    A_{u,v,0} &= 
    \sum_{\substack{ 
    {\bf i} \in \{1,\dots,N\}^u \\
    {\bf j} \in \{1,\dots,N\}^v \\
      }}
    \<\varphi_2 | \bar Z_{\bf i} |\varphi_2\>
    \<\varphi_2 | \bar Z_{\bf j} |\varphi_2\>, \notag
    \\
    A_{0,v,w} &= 
    \sum_{\substack{ 
    {\bf j} \in \{1,\dots,N\}^v \\
    {\bf k} \in \{1,\dots,N\}^w \\
     }}
    \<\varphi_2 | \bar Z_{\bf k} |\varphi_2\>
    \<\varphi_2 | \bar Z_{\bf k} \bar Z_{\bf j} |\varphi_2\>, \notag
    \\
    A_{u,v,w} &= 
    \sum_{\substack{ 
    {\bf i} \in \{1,\dots,N\}^u \\
    {\bf j} \in \{1,\dots,N\}^v \\
    {\bf k} \in \{1,\dots,N\}^w \\
    }}
    \<\varphi_2 | \bar Z_{\bf k} \bar Z_{\bf i} |\varphi_2\>
    \<\varphi_2 | \bar Z_{\bf k} \bar Z_{\bf j} |\varphi_2\>.   
    \label{eq:z-sums}
\end{align}
Let $\omega= \mathcal D_\lambda(|\varphi_2\>\<\varphi_2|)$.
The lower bound that we use on the QFI is given by $2\tr(\omega^2 H^2)- 2\tr(\omega H \omega H)$.
Since $Z_u$ and $Z_v$ commute, and the trace has a cyclic property, we can 
easily determine that 
$\tr \omega^2 H^2=  \tr ( \sum_{u,v=1}^N \omega^2 Z_u Z_v ).$
From this, we can ascertain that 
\begin{align}
&\tr(\omega ^2 H^2) =  
\lambda^2 A_{0,2,0}
+ \frac{2 \lambda (1-\lambda) }{N} A_{0,2,1} 
+ \frac{(1-\lambda)^2}{N^2} A_{0,2,2}.
\label{eq:dephasing-sshh}
\end{align} 
We can similarly find that 
\begin{align}
&\tr (\omega H \omega H ) 
= \lambda^2 A_{1,1,0}
+ \frac{ 2 \lambda (1-\lambda) }{N} A_{1,1,1} 
+ \frac{ (1-\lambda)^2 }{N^2} A_{1,1,2}. \label{eq:dephasing-shsh}
\end{align}
It remains to show that $A_{i,j,k} = z_{i,j,k},$
where
\begin{align}
z_{0,2,0} =&  N + N(N-1) v_2, \notag
\\
z_{1,1,0} =& N^2/4, \notag
\\
z_{0,2,1} =& 
\frac{ N  +  3 N (N-1)  }{4}  + \frac{N (N-1) (N-2) v_3}{2}, \notag
\\
z_{1,1,1} =& N + 2N(N-1)v_2 + N(N-1)^2 v_2^2.
\end{align}
We show this first using permutation-invariance of the probe state to reorder the Pauli $Z$ operators to only act non-trivially on the first few qubits. 
We then count the number of instances in which the resultant Pauli operator acts non-trivially on zero, one, two, three and four qubits.
Since Dicke states of different weights when multiplied by diagonal matrices remain orthogonal, this allows us to write the sums as 
linear combinations of $Z$-type Dicke inner products that can be calculated using 
Lemma \ref{lem:dicke-inner}, from which the result follows.
\end{proof}
This allows us to readily obtain the following asymptotic result for the dephasing channel $\mathcal D_\lambda$ by taking the limit of $g$ being large in Theorem \ref{thm:single-qubit-dephasing-non-asymptotic}.
\begin{corollary}
\label{coro:dephasing asymptotic}
When $g$ becomes large, the QFI of $\mathcal D_\lambda(|\varphi_2\>\<\varphi_2|)$ divided by $N^2$ is at least
\begin{align}
\frac{\lambda^2}{2 n}  + \lambda(1-\lambda)\frac{n-1}{4n^2}   +(1- \lambda)^2 \frac{(n^3+n-2)}{32 n^4}.\notag
\end{align}
\end{corollary}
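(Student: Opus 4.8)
The plan is to obtain the corollary as the $g\to\infty$ limit of the finite-size bound of Theorem~\ref{thm:single-qubit-dephasing-non-asymptotic}, after dividing through by $N^2 = 4g^2 n^2$. Writing that bound as $L = 2g^2 n \lambda^2 + \lambda(1-\lambda)g^2(n-1) + (1-\lambda)^2 g_2/N^2$, I would first dispose of the two terms that are already explicit in $g$ and $n$. Dividing by $N^2 = 4 g^2 n^2$ cancels the factor $g^2$ in each, giving $\tfrac{2g^2 n\lambda^2}{4g^2 n^2} = \tfrac{\lambda^2}{2n}$ and $\tfrac{\lambda(1-\lambda)g^2(n-1)}{4g^2 n^2} = \tfrac{\lambda(1-\lambda)(n-1)}{4n^2}$ with no limit required; these reproduce the first two summands of the claim verbatim.

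All of the remaining effort sits in the third term $(1-\lambda)^2 g_2/N^4$, where $g_2 = 2(z_{0,2,2}-z_{1,1,2})$, so I must compute $\lim_{g\to\infty} g_2/N^4$. Substituting the explicit forms of $z_{0,2,2}$ and $z_{1,1,2}$ from the theorem, I would isolate the coefficient of $N^4$. In $z_{0,2,2}$ the only degree-four-in-$N$ contribution is $N(N-1)v_2\cdot(N-2)(N-3)v_4 = N^4 v_2 v_4 + O(N^3)$; the pieces $N(N-1)v_2(5N-8)v_2$, the $+2$ inside the bracket, $N^2$, and $N^2(N-1)v_2$ are all $O(N^3)$ or smaller. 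Likewise the sole degree-four contribution in $z_{1,1,2}$ is $N(N-1)(N-2)^2 v_3^2 = N^4 v_3^2 + O(N^3)$. Hence $g_2 = 2N^4(v_2 v_4 - v_3^2) + O(N^3)$, so that $g_2/N^4 \to 2(v_2 v_4 - v_3^2)$ evaluated at the limiting inner products. A point worth checking carefully is that the $O(1/g)=O(1/N)$ finite-size corrections to $v_2,v_3,v_4$ only feed into the $O(N^3)$ remainder and therefore drop out after dividing by $N^4$; this is exactly what licenses replacing each $v_j$ by its $g\to\infty$ value when reading off the leading coefficient.

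To finish, I would evaluate $2(v_2 v_4 - v_3^2)$ using the ratios of leading coefficients of the exact rational expressions in Lemma~\ref{lem:probe-state-Z-inner-product}, namely $v_2 \to \tfrac{n+1}{4n}$, $v_3 \to \tfrac{n+3}{8n}$ and $v_4 \to \tfrac{n^3+6n^2+3n-2}{16n^3}$. A direct computation shows the $n^4$ and $9n^2$ terms cancel between $v_2 v_4$ and $v_3^2$, leaving $v_2 v_4 - v_3^2 = \tfrac{n^3+n-2}{64 n^4}$, whence $2(v_2 v_4 - v_3^2) = \tfrac{n^3+n-2}{32 n^4}$, which is the third summand. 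Since Theorem~\ref{thm:single-qubit-dephasing-non-asymptotic} supplies $\textsc{QFI}\ge L$ for every $g$, dividing by $N^2$ and letting $g\to\infty$ produces the asserted lower bound.

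The main obstacle is entirely the bookkeeping in this third term: correctly pinning down which products survive at order $N^4$ and verifying the clean cancellation in $v_2 v_4 - v_3^2$. The final value is delicate, because the leading $n^4/(64 n^4)$ pieces of $v_2 v_4$ and $v_3^2$ must cancel exactly; a single slip in the limiting value of $v_4$ (for example an incorrect power of $n$ in its denominator) would leave a spurious $O(1)$-in-$n$ term and destroy the match with $(n^3+n-2)/(32 n^4)$. Everything else is routine limit-taking.
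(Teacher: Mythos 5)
Your proof is correct and follows the same route the paper intends: take the $g\to\infty$ limit of the bound in Theorem~\ref{thm:single-qubit-dephasing-non-asymptotic} divided by $N^2=4g^2n^2$, the only nontrivial work being the identification of the order-$N^4$ coefficient $2(v_2v_4-v_3^2)$ inside $g_2$. You were also right to read off $\lim_{g\to\infty} v_4=(n^3+6n^2+3n-2)/(16n^3)$ from the exact rational expression rather than from the stated $\gamma_4$ (whose denominator appears to carry a typo, $16n^2$ in place of $16n^3$); with that value the cancellation indeed yields $v_2v_4-v_3^2=(n^3+n-2)/(64n^4)$ and hence the claimed third summand $(n^3+n-2)/(32n^4)$.
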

When $\lambda = \frac 1 2$, the QFI divided by $N^2$ is at least 
\begin{align}
 \frac{25}{128n}  - \frac{1}{16n^2} + \frac{1}{128n^3} -\frac{1}{64n^4}.
\end{align}
We now compare the performance of our probe state with the GHZ state, after acted upon by $\mathcal D_\lambda$. The worst thing that could happen to a GHZ state when acted on by $\mathcal D_\lambda$ is when $\lambda=(1-\lambda)=\frac 1 2$,   which reduces the GHZ state to a state with zero QFI.
However, Figure \ref{fig:bounds dephasing} which illustrates the results in Thereom \ref{thm:single-qubit-dephasing-non-asymptotic} numerically shows that when $\lambda = 1/2$, our symmetric probe state can still exhibit a quantum advantage.

\begin{figure}[!htb]
\includegraphics[width=1\columnwidth]{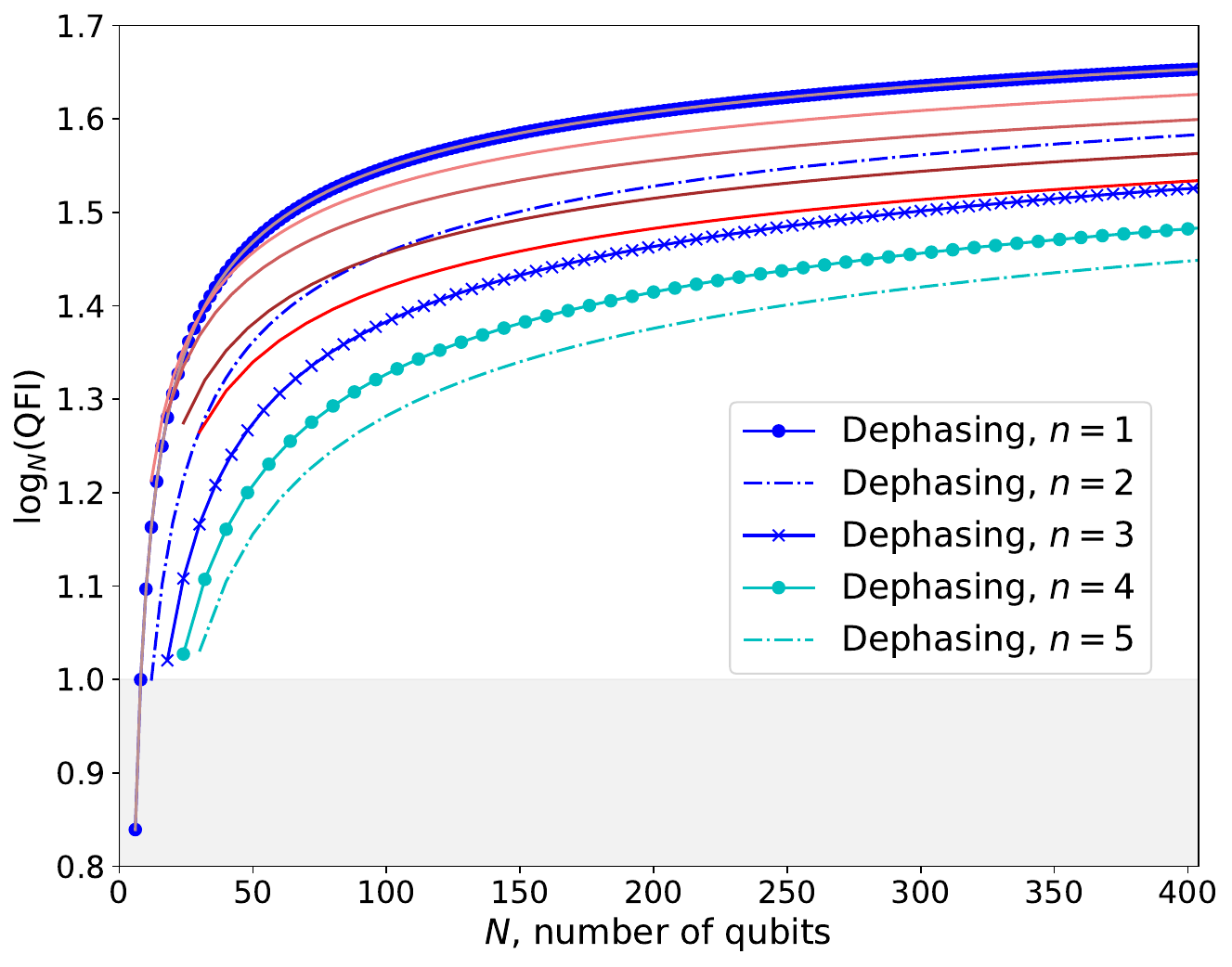}
\caption{Lower bounds for the \textsc{QFI} of $\mathcal D_\lambda(|\varphi_2\>\<\varphi_2|)$ where $|\varphi_2\>$ is the ideal symmetric probe state and $\mathcal D_\lambda$ is a special dephasing channel with $\lambda=\frac 1 2$ that dephases up to 1 qubit. 
When the lines are above 1, there is a quantum advantage in using $|\varphi_2\>$ in the presence of dephasing noise $\mathcal D_\lambda$.
The solid lines from the top give the asymptotic lower bounds on the QFI from Corollary \ref{coro:dephasing asymptotic}. The other lines are non-asymptotic lower bounds on the QFI from Theorem \ref{thm:single-qubit-dephasing-non-asymptotic}.
The asymptotic and non-asymptotic lower bounds agree for $n=1$, and begin to differ for larger $n$.}
\label{fig:bounds dephasing}
\end{figure}
From our lower bound on the QFI of 
$\mathcal D_{\lambda	}( |\varphi_2\>\<\varphi_2|)$, 
we can now derive a lower bound on the QFI of $\sigma$, where $\sigma$ in \eqref{eq:dephasing definition} is the state after encountering an i.i.d. dephasing channel.
We do so by approximating the i.i.d. dephasing channel with the artificial dephasing channel $D_\lambda$ that introduces only zero or one phase error.
This approximation is accurate when i.i.d. dephasing probability is small, so that with overwhelming probability, only zero or one phase errors occur.
\begin{corollary}\label{coro:dephasing-analytic}
Using the notation of Theorem \ref{thm:single-qubit-dephasing-non-asymptotic}
with the definition of the dephased probe state $\sigma$ defined in \eqref{eq:dephasing definition}, the \textsc{QFI} of $\sigma$ is at least
\begin{align}
&(1-p)^{2N-2}\left( 2(1-p)^2 g^2 n  + p(1-p)g^2(n-1)  
+ p^2 \frac{g_2}{N^2}\right) 
\notag\\
&- 4 N^2 t^2 e.\notag
\end{align}
where $t = pN$ is the expected number of dephasing errors and is held constant.
\end{corollary}
\begin{proof}
Recall that $t = pN$.
When $t$ is small, the dephased state $\sigma$ in 
\eqref{eq:dephasing definition} is well approximated by its leading order terms in $j$. To perform a perturbative analysis, we work with an approximant to $\sigma$ given by
\begin{align}
\sigma_k &=    
\sum_{j=0}^{k}  p^j (1-p)^{N-j}
     \sum_{       {\bf x}  \in B_j  } 
    Z_{{\bf x}}     |\varphi_2\>\<\varphi_2|     Z_{{\bf x}}.
\end{align}
Note that $\sigma_N = \sigma$. The approximation error of $\sigma_k$ to $\sigma$ is given by the trace norm of $\epsilon_k$ where 
$\epsilon_k = \sigma - \sigma_k$. Using the linearity of the commutator and the triangle inequality of norms, we have 
\begin{align}
\|  [\sigma , H ] \|_1 =  \|  [\sigma_k , H ] + [\epsilon_k ,H] \|_1  \ge \|  [\sigma_k , H ]\|_1  - \| [\epsilon_k, H] \|_1 .
\end{align}
It follows that 
\begin{align}
\|  [\sigma , H ] \|_1^2   
&\ge \|  [\sigma_k , H ]\|_1^2  - 2\|  [\sigma_k , H ]\|_1 \| [\epsilon_k, H] \|_1  + \| [\epsilon_k, H] \|_1 ^2
\notag\\
&\ge \|  [\sigma_k , H ]\|_1^2  - 2\|  [\sigma_k , H ]\|_1 \| [\epsilon_k, H] \|_1.
\end{align}
Now note that $\|H\|_\infty = N$, and from the triangle inequality, we have $\|\epsilon_k\|_1 \le  \tau_k$,
where
\begin{align}
\tau_k = \sum_{j=k+1}^{N} \binom{N}{j} p^j (1-p)^{N-j}.
\end{align}
From the H\"older inequality, we have 
$\| \sigma_k H \|_1 \le \| \sigma_k\|_1\| H \|_\infty \le N$
and
$\| \epsilon_k H \|_1 \le \| \epsilon_k\|_1\| H \|_\infty \le N\tau_k$.
Similarly,
$\| H \sigma_k \|_1 \le \| \sigma_k\|_1\| H \|_\infty \le N$
and
$\| H \epsilon_k \|_1 \le \| \epsilon_k\|_1\| H \|_\infty \le N \tau_k$.
Hence from the triangle inequality for norms, we have
$\|[\sigma_k , H] \|_1 \le 2N$ and $\|[\epsilon_k , H] \|_1 \le 2N \tau_k$.

Hence we have
\begin{align}
\|  [\sigma , H ] \|_1^2 \ge \|  [\sigma_k , H ]\|_1^2  -8 N^2 \tau_k.
\label{eq:approximant-bound}
\end{align}
Now note the upper bound $\tau_k \le
\sum_{j=k+1}^\infty \binom{N}{j} p^j$. From this, we can see that $\tau_k$ is at most the approximation error of using an order $k$ polynomial to approximate the function $(1+p)^N$. Using the Lagrange form for the remainder term comprising of monomials in $p$ of order at least $k+1$ in the Taylor series expansion of $(1+p)^N$, we know 
$\tau_k \le \frac{p^{k+1}}{(k+1)!}\frac{d^{k+1}}{dp^{k+1}}(1+p)^N 
\le \frac{p^{k+1}}{(k+1)!} N^{k+1} (1+p)^{N-k-1}$. Substituting $t=p/N$, we get
\begin{align}
\tau_k \le
\frac{ t^{k+1}}{(k+1)!}(1+t/N)^{N-k-1}\le 
\frac{ t^{k+1}}{(k+1)!}(1+t/N)^{N}.
\end{align}
For positive $t$, $(1+t/N)^N \le \sum_{j=0}^N \frac{t^j}{j!} \le e^{t}.$
Hence 
\begin{align}
\tau_k \le \frac{t^{k+1}}{(k+1)!}e^t. \label{eq:tail-bound}
\end{align} 
For us, we get $\tau_1 \le t^2 e/ 2$.

Now note that
\begin{align}
\sigma_1 = (1-p)^{N-1} \mathcal D_{1-p}( |\varphi_2\>\<\varphi_2| ).
\end{align}
Hence it follows that 
\begin{align}
\|[\sigma_1, H ]\|_1  = (1-p)^{N-1} 
\|[\mathcal D_{1-p}( |\varphi_2\>\<\varphi_2| ), H]\|_1,
\end{align}
and using Theorem \ref{thm:single-qubit-dephasing-non-asymptotic} with \eqref{eq:approximant-bound} with \eqref{eq:tail-bound}, the lower bound on the QFI for the dephased state $\sigma$ follows.
\end{proof}

From this we can obtain an asymptotic result on the QFI of $\sigma$ in the limit of large $N$, and where the expected number of dephasing errors is held constant.
\begin{corollary}\label{coro:iid-dephasing-asymptotic}
Using the notation of Theorem \ref{thm:single-qubit-dephasing-non-asymptotic}
with the definition of the dephased probe state $\sigma$ defined in \eqref{eq:dephasing definition}, in the limit of large $g$, 
the \textsc{QFI} of $\sigma$ divided by $N^2$ is at least
\begin{align}
 \frac{e^{-2t}}{2 n}  - 4 t^2 e.\notag
\end{align}
where $t$ is the expected number of dephasing errors and is held constant.
\end{corollary}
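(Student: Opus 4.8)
The plan is to pass to the limit directly in the finite-size bound furnished by Corollary~\ref{coro:dephasing analytic}, rather than re-deriving anything from scratch. First I would divide that bound by $N^2$, substitute $N^2 = 4g^2 n^2$, and simplify, so that the normalized lower bound on the \textsc{QFI} of $\sigma$ takes the form
\begin{align}
(1-p)^{2N-2}\left( \frac{(1-p)^2}{2n} + \frac{p(1-p)(n-1)}{4n^2} + \frac{p^2 g_2}{N^4}\right) - 2te. \notag
\end{align}
The whole argument then reduces to a term-by-term passage to the limit $g \to \infty$, keeping $t = pN$ fixed so that $p = t/N \to 0$ as $N = 2gn \to \infty$. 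Note that every quantity here is a function of $g$ alone once $t$ and $n$ are fixed, so there is no order-of-limits subtlety: it is simply the limit of a sequence.

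The next step handles the prefactor and the three bracketed terms. For the prefactor, the standard exponential limit gives $(1-p)^{2N-2} = (1-t/N)^{2N-2} \to e^{-2t}$, since $(1-t/N)^{2N}\to e^{-2t}$ and the residual factor $(1-t/N)^{-2}\to 1$. Inside the bracket, the first term $\frac{(1-p)^2}{2n}$ converges to $\frac{1}{2n}$ because $p \to 0$, while the second term carries an explicit factor of $p$ and therefore vanishes. Assembling these with the untouched tail term $-2te$ yields the claimed bound $\frac{e^{-2t}}{2n} - 2te$.

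The main obstacle is controlling the third term $\frac{p^2 g_2}{N^4}$, where one must show that $g_2$ grows no faster than $N^4$ so that the surviving factor after normalization is $O(p^2)$ and hence negligible. Here I would invoke Corollary~\ref{coro:dephasing asymptotic} as a shortcut: its third summand $(1-\lambda)^2\frac{n^3+n-2}{32n^4}$ arises precisely as $(1-\lambda)^2 \lim_{g\to\infty} g_2/N^4$, which identifies $\lim_{g\to\infty} g_2/N^4 = \frac{n^3+n-2}{32n^4}$, a finite constant independent of $\lambda$. Multiplying this finite limit by $p^2 \to 0$ shows the third term vanishes. Alternatively, one can establish the same boundedness directly from the definition $g_2 = 2(z_{0,2,2}-z_{1,1,2})$ by observing that $z_{0,2,2}$ and $z_{1,1,2}$ are polynomials of degree at most four in $N$ whose coefficients involve only $v_2,v_3,v_4$, each tending to a finite limit by Lemma~\ref{lem:probe-state-Z-inner-product}; everything beyond this is a routine limit computation.
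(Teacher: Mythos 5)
Your proposal is correct and follows exactly the route the paper intends: the corollary is presented as an immediate limit of Corollary~\ref{coro:dephasing analytic}, obtained by dividing that bound by $N^2=4g^2n^2$ and letting $g\to\infty$ with $t=pN$ fixed, so that $(1-p)^{2N-2}\to e^{-2t}$, the $p$-weighted terms vanish, and $-2te$ survives. Your extra step of certifying $\lim_{g\to\infty} g_2/N^4=(n^3+n-2)/(32n^4)$ via Corollary~\ref{coro:dephasing asymptotic} (or directly from Lemma~\ref{lem:probe-state-Z-inner-product}) is exactly the bookkeeping the paper leaves implicit, and it is sound.
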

Corollary \ref{coro:iid-dephasing-asymptotic} implies that our probe state can reproduce the Heisenberg scaling for small number of dephasing errors.
Note that for large enough $t$, the lower bound to the QFI of gnu probe states for i.i.d. dephasing errors in Corollary \ref{coro:dephasing-analytic} and Corollary \ref{coro:iid-dephasing-asymptotic} for the non-asymptotic and asymptotic cases respectively can be negative. In this scenario the lower bounds convey no information, because the QFI must always be at least zero.
Since the function $q(t) = \frac{e^{-2t}}{2 n}  - 4 t^2 e$ is monotone decreasing for positive $t$, $q(t^*) = 0$ for positive $t^*$ only when
$ \frac{1}{8 n e}  = (t^*)^2 e^{2t^*}$, which is when 
\begin{align}
t^* =  W(  \frac{1}{2\sqrt{2 n e}} ), 
\end{align}
where $W$ denotes the principal branch of the Lambert $W$ function. 
When $n=1,2,3,4,5$, we have $t^*=0.17925$, $0.13277$, $0.11082$, $0.0972812$, $0.0878367$ respectively. 
Then for all $t \in [0, t^*]$, the lower bound on the QFI in Corollary \ref{coro:iid-dephasing-asymptotic} will be non-trivial.

We can compare the asymptotic performance of gnu states with respect to i.i.d. dephasing channels using Corollary \ref{coro:iid-dephasing-asymptotic} with noisy GHZ states.
When i.i.d. dephasing noise afflicts GHZ states $(|0\>^{\otimes N} + |1\>^{\otimes N})/\sqrt 2$ where the probability of phase flip per qubit is $p$, 
the probability of the state remaining a GHZ state is 
$p_0 = \frac{( (1-p) - p)^N  + (1-p +p)^N}{2}$ and the probability of the GHZ state acquiring a phase flip is 
$p_1 =  \frac{-( (1-p) - p)^N  + (1-p +p)^N}{2}$. 
The QFI of the noisy GHZ state is $(p_0-p_1)N^2 = N^2 (1-2p)^N$. When $t$ is constant and $p=t/N$, we have $N^2 (1-2p)^N \approx N^2 e^{-2t}$. 
Hence the lower bound of the QFI for gnu states under i.i.d. dephasing noise is lower than the exact QFI of noisy GHZ states under i.i.d. dephasing noise.

\section{Discussions}
In this paper, we study the potential of using explicit symmetric probe states for the purpose of robust metrology. 
In using the terminology robust metrology, we consider a metrological protocol where carefully chosen probe states are allowed to naturally decohere and are subsequently measured.
Our considered noise processes include erasure errors and dephasing errors.
We show that if the probe states lie within the codespace of certain permutation-invariant quantum codes \cite{ouyang2014permutation} which allow for the detection of at least 1 error,
such probe states are useful for robust metrology in the NISQ regime (see Figures \ref{fig:bounds} and \ref{fig:bounds dephasing}). 
We also demonstrate that in the asymptotic regime, our probe states can recover Heisenberg scaling for quantum metrology when the number of erasure errors is very few, and also when the dephasing error becomes increasingly negligible.

To arrive at our lower bounds on the \textsc{QFI}, we rely on explicit structural properties of (1) Dicke states under action of the partial trace, and (2) the connection between Dicke inner products and Krawtchouk polynomials.
The first and second structural properties correspond to erasure and dephasing errors respectively. 
In both cases, we reduce the problem to that of performing binomial-type summations of the form $\sum_{j=0}^n \binom n j j^x$ in the asymptotic limit, when $g$ becomes very large and $x<n$. 
Leveraging on this, we obtain compact analytical lower bounds for the \textsc{QFI} in the case of robust metrology.

For erasure errors, from Theorem 2, we can see that the QFI decreases exponentially with $n$ when $g,n>t$. 
For the dephasing channel $\mathcal D_\lambda$, since $2gn=N$ is constant, the lower bound in the QFI in Theorem 7 decreases linearly with $n$.
For the i.i.d. dephasing channel in Corollary \ref{coro:iid-dephasing-asymptotic}, we also see that the lower bound on the QFI decreases linearly with $n$.

From Figures \ref{fig:bounds} and \ref{fig:bounds dephasing}, when there are at least 50 qubits, there is a discernable quantum advantage in using our proposed probe state for robust metrology.
For example, if only 1 out of 50 qubits is erased, the corrupted probe state $|\varphi_1\>$ can nonetheless attain a \textsc{QFI} of at least $N^{1.4}$ which is about 5 times larger than the baseline \textsc{CFI} of $N$. 

Our results complement the literature on robust quantum metrology \cite{PRX-random-states} and quantum metrology with error correction \cite{kessler2014quantum,dur2014improved,arrad2014increasing,unden2016quantum,matsuzaki2017magnetic,layden2019ancilla,gorecki2019quantum,PRXQuantum.2.010343} by considering symmetric probe states that support quantum error correction.
Because our probe states inherit the quantum error correction (\textsc{QEC}) capability of the permutation-invariant quantum codes illustrated in Ref.~\cite{ouyang2014permutation}, we know that active quantum error correction can always be employed on our probe states to further amplify the \textsc{QFI}.
We expect that the \textsc{QFI} can be further enhanced with full blown \textsc{QEC}, and this remains to be explored.

Analyzing the performance of these symmetric probe states from gnu codes with respect to a number of errors that grows with the system size is an important problem. However this is beyond the scope of current manuscript, and we leave this interesting question for future work.
Deriving a fully fleshed out practical proposal based on our scheme for robust metrology remains an open challenge. 
While our probe states can be prepared in the ultrastrong coupling regime of superconducting charge qubits \cite{init-picode-2019-PRA} or using geometric phase gates \cite{johnsson2020geometric}, questions of state preparation, decoding \cite{ouyang2021permutation} and state readout in a broad range of other architectures remain to be fully addressed. 

In summary, we prove that explicit symmetric probe states can give rise to a quantum advantage in the NISQ regime spite of being mildly corrupted.
Moreover, in the asymptotic limit, we show that it is possible for our proposed symmetric probe states to attain a Heisenberg scaling even if they suffer a non-trivial amount of noise.
This paves the way towards exploring the effectiveness of our symmetric probe states against different types of noise processes, such as amplitude damping noise and depolarizing noise.
It is also interesting to further study the extent in which our symmetric probe states can remain effective when multiple parameters are to be simultaneously estimated \cite{sidhu2019tight}.
More fundamentally, it is interesting to unravel the necessary and sufficient conditions for robust metrology, at least in the case of symmetric qubit states.
We believe that the strength of our symmetric probe states in robust metrology is that the noise processes considered do not take the probes state to orthogonal states. Our probe states, arising from permutation-invariant quantum codes, exhibit a highly non-additive behavior, and it might be possible that this leads to their utility in quantum metrology. Indeed, we are tempted to conjecture that any permutation-invariant code that can detect at least one error is good for quantum metrology with respect to erasure errors.

\subsection{Verification of quantum metrology}

In a cryptographic setting, we want to be sure that the probe state that we intend to use for quantum metrology is what it is supposed to be. If the probe state has maliciously tampered with, the final value of the estimated signal need not reflect the true value of the signal.

To mitigate this type of malicious attack, we can employ verification techniques prior to performing the quantum metrology task. Verification of quantum states is done by performing quantum tomography on additional quantum states \cite{takeuchi2018, pallister2018, zhu2019} and is generally done to add a measure of cryptographic security if quantum states are provided from an untrusted party \cite{zhu2019}, or simply to ensure that the quantum state preperation is being done correctly \cite{flammia2011}.

The efficiency of a verification protocol is based on a cryptographic quantity known as soundness, which quantifies the ability of a malicious adversary tampering to tamper with the quantum states and remain undetected. Therefore, an ideal verification protocol requires very few additional resources to achieve a low soundness value. Most verification protocols are designed for a specific class of quantum states, as they take advantage of specific constraints or symmetries unique to said class, for example graph states \cite{zhu2019GraphStates, markham2020} and Dicke states \cite{liu2019}. There are general protocols to verify arbitrary quantum states \cite{takeuchi2018}, however for most quantum states (including the symmetric probe states), these protocols would require exponentially more resources to achieve the same level of soundness achieved by the graph state or Dicke state verification protocols.

Unfortunately, we do not know of any efficient verification protocols that use local measurements and operate with a polynomial number of resources. 
In lieu of considering verification protocols that use non-local measurements, we could consider verifying a symmetric state which well approximates our gnu probe state that furthermore is also known to admit an efficient verification procedure. In this scenario, we note that for $u=1$, the gnu probe state gets closer to the half-Dicke state $|D^N_{N/2}\>$ as $n$ increases. Thus, in practice one can utilize the Dicke state verification protocol \cite{liu2019} for the symmetric probe state as a proxy for verifying the gnu probe state. This can be done with fewer additional resources than a more general protocol, however it comes with the caveats of i) not truly verifying the quantum state, and ii) possible failure of the verification protocol.

We hope that in future work, further research can be done on the verification of symmetric probe states for the robust quantum metrology.

\section*{Acknowledgements}

YO acknowledges support from the Singapore National Research Foundation under NRF Award NRF-NRFF2013-01, the U.S. Air Force Office of Scientific Research under AOARD grant FA2386-18-1-4003, and the Singapore Ministry of Education. 
YO acknowledges support from EPSRC (Grant No. EP/M024261/1).
and the QCDA project (EP/R043825/1) which has received funding from the QuantERA ERANET Cofund in Quantum Technologies implemented within the European Union’s Horizon 2020 Programme. NS and DM gratefully acknowledge support from the ANR through the ANR-17-CE24-0035 VanQuTe project.

\bibliography{robustmetro}{}
\bibliographystyle{ieeetr}

\appendix

\section{Commutation identities}
For Hermitian matrices $A$ and $B$, we have that
\begin{align}
&\tr( [A,B]^\dagger [A,B])\notag\\
=&
\tr ( (AB-BA)^\dagger (AB-BA)  ) 
\notag\\
=&
\tr (  (BA-AB) (AB-BA)  ) 
\notag\\
=&
\tr (  (BAAB-BABA-ABAB +ABBA)  ) 
\notag\\
=&
2 \tr (A^2 B^2) - 2 \tr( (AB)^2 ) .\label{eq:norm2-[A,B]}
\end{align}
Also note that for Hermitian matrices $A,B,C$, we have
\begin{align}
&\tr( [A,B]^\dagger [C,B])\notag\\
=&
\tr ( (AB-BA)^\dagger (CB-BC)  ) 
\notag\\
=&
\tr (  (BA-AB) (CB-BC)  ) 
\notag\\
=&
\tr (  (BACB-BABC-ABCB +ABBC)  ) 
\notag\\
=&
\tr ((AC+CA)B^2) - 2 \tr( AB CB ) .\label{eq:norm2-[ABC]}
\end{align}

\section{Technical proofs}
\label{app:A}

\begin{proof}[Proof of Theorem \ref{thm:erasures-exact}]
We proceed to evaluate both $ \tr (\rho^2 H^2)$
and $  \tr( \rho H \rho H ) $. 
Using the linearity of the trace and the fact that $Z_i^2$ is the identity operator,
we get $ \tr (\rho^2 H^2)   =    \sum_{i,j=1}^{N-t}   \tr (\rho^2 Z_i Z_j )  
    = 
       \sum_{i=1}^{N-t} \tr (\rho^2  Z_i^2)    +        \sum_{ i\neq j} \tr (\rho^2  Z_i Z_j) $. 
The permutation-invariance of the partial trace of the probe state then gives
\begin{align}
    \tr (\rho^2 H^2)=  (N-t)    \tr (\rho^2 )     +   2  \binom{ N-t} {2}    \tr (\rho^2  Z_1 Z_2) . \label{eq:1}
\end{align} 
From the representation of the partial trace of our symmetric probe state $\rho$ given in Lemma \ref{lem:erasure-representation}, we readily obtain from the orthogonality relationship (\ref{eq:psi-ortho})
and orthogonality of Dicke states that
\begin{align}
  \rho^2 &= 
\sum_{u=0}^t \binom t u ^2      |\theta_u\>\<\theta_u|  \sum_{j=1} ^{n-1} a_{j,u}
+
W_1 + W_2 + W_3.
\end{align}
where
\begin{align}
2^n W_1 &=    |H^{N-t}_{0}\>\<H^{N-t}_{0}| \<\theta_0|\theta_0\>
 +
    |H^{N-t}_{N-t}\>\<H^{N-t}_{N-t}| \<\theta_t|\theta_t\>
    \notag\\
    &\quad
+       |\theta_0\>\<\theta_{0}|   +   |\theta_t\>\<\theta_{t}|      
\notag\\
2^{3n/2} W_2 &=  |\theta_0\>\<H^{N-t}_{0}|    +       |\theta_t\>\<H^{N-t}_{gn-t}|   \notag\\
  &\quad +
  |H^{N-t}_{0}\>\<  |\theta_0|   +  |H^{N-t}_{gn-t}\>\<  \theta_t|
\notag\\ 
W_3 &=
\frac{1}{2^{2n}} \left( (|0\>\<0|)^{\otimes N-t} + (|1\>\<1|)^{\otimes N-t} \right). 
\end{align}
It readily follows that
\begin{align}
\tr (\rho^2)  &=  \sum_{u=0}^t \binom t u^2 \sum_{j,k=1}^{n-1} a_{j,u}a_{k,u}
+  \sum_{j\in \{0,t\}} \frac{2 \<\theta_j|\theta_j\>}{2^n} + \frac{2}{2^{2n}},  \label{eq:2}
\end{align}
and
\begin{align}
\tr(\rho^2 Z_1 Z_2 ) &=  
\sum_{u=0}^t \binom t u^2 \sum_{j=1}^{n-1} a_{j,u} \<\theta_u|Z_1 Z_2|\theta_u\>
\notag\\
&+ 
 \sum_{j\in \{0,t\}} 
\frac{ \<\theta_j|\theta_j\>  + \<\theta_j| Z_1 Z_2|\theta_j\>}{2^n} + \frac{2}{2^{2n}}.
\end{align}
Next, we use (\ref{eq:dickeZ2})  to get
\begin{align}
&     \<\theta_u |Z_1 Z_2 |\theta_u\> \notag\\
    =&
    \sum_{j,k=1}^{n-1}
        \frac{\sqrt{\binom n j \binom n k}}
        {\sqrt{\binom N{gj} \binom N{gk} } 2^{n} } 
    \< H^{N-t}_{gj-u} | Z_1 Z_2 | H^{N-t}_{gk-u} \>
    \notag\\
    =&
    \sum_{j=1}^{n-1}
        \frac{\binom n j }
        {\binom N{gj}  2^{n} } 
    \< H^{N-t}_{gj-u} | Z_1 Z_2 | H^{N-t}_{gj-u} \>
    \notag\\    
        = &
    \sum_{j=1}^{n-1}
    a_{j,u} \left( 1 - b_{j,u}
    \right)\label{eq:3}.
\end{align}
Using  
$\tr (\rho^2 H^2) = \sum_{j=1}^{N-t} \tr(\rho^2 ) + 
\sum_{j\neq k} \tr(\rho^2 Z_i Z_j ) $, we use   (\ref{eq:1}), (\ref{eq:2}) and (\ref{eq:3}) 
to get
\begin{align}
 &\tr (\rho^2 H^2) \notag\\
    &=  
  \frac{  (N-t)^2}{2^{2n-1}} 
  +
 \frac{ (N-t)^2}{2^n}
  \sum_{j=1}^{n-1} \left( a_{j,0}(2-b_{j,0}) + a_{j,t}(2-b_{j,t}) \right) \notag\\
  &+
\frac{ N-t}{2^n}
  \sum_{j=1}^{n-1} \left( a_{j,0} b_{j,0} + a_{j,t} b_{j,t} \right)   
    \notag\\
&   +
(N-t)^2  \sum_{u=0}^t \binom t u ^ 2 \sum_{j=1}^{n-1}   \sum_{k=1}^{n-1}
           a_{j,u}     a_{k,u} 
     \left( 1 - b_{k,u} + \frac{b_{k,u}}{N-t}	\right)
    \label{eq:4}.
\end{align}
Now we proceed to evaluate $\tr(\rho H \rho H)$.
Orthogonality and permutation-invariance of the states $|\theta_u\>$ implies that  
\begin{align}
&
 \frac{ \rho	H \rho }{N-t}   \notag\\
 = &
\sum_{u=0}^t \binom t u ^2      |\theta_u\>\<\theta_u|   \<\theta_u| Z_1  |\theta_u\>
\notag\\
&+
\frac{1}{2^n} \left(   |H^{N-t}_{0}\>\<H^{N-t}_{0}| \<\theta_0|Z_1|\theta_0\> 
 +
    |H^{N-t}_{N-t}\>\<H^{N-t}_{N-t}| \<\theta_t|Z_1|\theta_t\> 
+       |\theta_0\>\<\theta_{0}|    -   |\theta_t\>\<\theta_{t}|   \right)    
\notag\\
&+
\frac{1}{2^{3n/2}}\left(  |\theta_0\>\<H^{N-t}_{0}|    -    |\theta_t\>\<H^{N-t}_{gn-t}|    +  |H^{N-t}_{0}\>\<  |\theta_0|   -  |H^{N-t}_{gn-t}\>\<  \theta_t|\right) 
+
\frac{1}{2^{2n}} \left( (|0\>\<0|)^{\otimes N-t} - (|1\>\<1|)^{\otimes N-t} \right)
\notag\\
&+
\frac{1}{2^{n/2}}  \left(
\<\theta_0| Z_1  |\theta_0\> (   |H^{N-t}_{0}\> \<\theta_{0}|  +   |\theta_{0}\> \< H^{N-t}_{0}|  )
+ 
\<\theta_t| Z_1  |\theta_t\> ( |H^{N-t}_{gn-t}\> \<\theta_t|   + |\theta_t\>\< H^{N-t}_{gn-t}| \right).
\end{align}
Then we get
\begin{align}
& \frac{  \tr(\rho H \rho H) }{(N-t)^2} \notag\\
    =&
\sum_{u=0}^t \binom t u ^2      \<\theta_u| Z_1  |\theta_u\>^2
+
\frac{2}{2^n}   \left(    \<\theta_0|Z_1|\theta_0\>  - \<\theta_t|Z_1|\theta_t\>  \right)    
+
\frac{2}{2^{2n}}  
 .\label{eq:5}
\end{align}
We now find using (\ref{eq:dickeZ1}) that
\begin{align}
     \<\theta_u |Z_1 |\theta_u\>
    &=
    \sum_{j=1}^{n-1}
        \frac{\binom n j }
        {\binom N{gj}  2^{n} } 
    \< H^{N-t}_{gj-u} | Z_1 | H^{N-t}_{gj-u} \>
    \notag\\    
    &= 
    \sum_{j=1}^{n-1}
        \frac{\binom n j }
        {\binom N{gj}  2^{n} } 
    \binom{N-t}{gj-u} \left( 1 - \frac{2(gj-u)}{N-t}
    \right)
    \notag\\
&=
    \sum_{j=1}^{n-1} a_{j,u} \left( 1 - c_{j,u}   \right).
\end{align}
Hence we have
\begin{align}
&  \frac{  \tr(\rho	H \rho H) }{(N-t)^2} \notag\\
=&
\sum_{u=0}^t \binom t u ^2   
\sum_{j=1}^{n-1}
\sum_{k=1}^{n-1}
     a_{j,u} a_{k,u} (1-c_{j,u})(1-c_{k,u}) \notag\\
&+
\frac{2}{2^n}
\sum_{j=1}^{n-1}
  \left(   a_{j,0} (1-c_{j,0}) -a_{j,t} (1-c_{j,t})   \right)    
+
\frac{2}{2^{2n}}   .
 \label{eq:tr rho h rho h}
\end{align}
From (\ref{eq:4}) and (\ref{eq:tr rho h rho h}) we get
\begin{align}
&  
    \frac{ \tr(\rho^2H^2) - \tr(\rho	H \rho H) }{(N-t)^2}
    \notag\\ 
=&
 \frac{1}{2^n}
  \sum_{j=1}^{n-1} \left(
   a_{j,0}(-b_{j,0}+2c_{j,0}) + a_{j,t}(4-b_{j,t}-2c_{j,t})  
   \right)\notag\\
&  +
\frac{1}{2^n(N-t)}
  \sum_{j=1}^{n-1} \left( a_{j,0} b_{j,0} + a_{j,t} b_{j,t} \right)   
    \notag\\
&   +
  \sum_{u=0}^t \binom t u ^ 2 \sum_{j=1}^{n-1}   \sum_{k=1}^{n-1}
           a_{j,u}     a_{k,u} 
     \left(c_{j,u} + c_{k,u} - b_{k,u} \right.\notag\\
     &\left. \quad\quad\quad\quad\quad\quad\quad\quad\quad\quad\quad\quad + \frac{b_{k,u}}{N-t}   -  c_{j,u} c_{k,u}  	\right).
\end{align}
Rearranging the terms gives the result.
\end{proof}
\end{document}